\newenvironment{psmallmatrix}{\left(\begin{smallmatrix}}{\end{smallmatrix}\right)}
\newtheorem{theorem}{Theorem}[section]
\newtheorem{corollary}{Corollary}[theorem]
\newtheorem{lemma}[theorem]{Lemma}
\theoremstyle{definition}
\newtheorem{definition}{Definition}[section]
\DeclarePairedDelimiterX{\Iintv}[1]{\llbracket}{\rrbracket}{\iintvargs{#1}}
\NewDocumentCommand{\iintvargs}{>{\SplitArgument{1}{,}}m}
{\iintvargsaux#1} %
\NewDocumentCommand{\iintvargsaux}{mm} {#1\mkern1.5mu..\mkern1.5mu#2}
\newcolumntype{L}{>{$}p{.4\textwidth}<{$}} 
\newcolumntype{R}{>{$}p{.55\textwidth}<{$}}
\newcommand{\newclass}[2]{\newcommand{#1}{{\text{\upshape\sffamily #2}}\xspace}}
\renewcommand{\P}{{\text{\upshape\sffamily P}}\xspace}
\newclass{\NP}{NP}
\renewcommand{\L}{{\text{\upshape\sffamily L}}\xspace}
\newclass{\NL}{NL}
\newclass{\CL}{CL}
\newclass{\TIME}{TIME}
\newclass{\SPACE}{SPACE}
\newclass{\CSPACE}{CSPACE}
\newclass{\NCo}{NC$^1$}
\newclass{\SACo}{SAC$^1$}
\newclass{\ACo}{AC$^1$}
\newclass{\TCo}{TC$^1$}
\newclass{\NCt}{NC$^2$}
\newclass{\SACt}{SAC$^2$}
\newclass{\ShSACt}{\#SAC$^2$}
\newclass{\NC}{NC}
\newclass{\SAC}{SAC}
\newclass{\AC}{AC}
\newclass{\TC}{TC}
\newclass{\BPL}{BPL}
\newclass{\ZPP}{ZPP}
\title{Catalytic Computing and Register Programs Beyond Log-Depth}
\author{Yaroslav Alekseev\thanks{Supported by ISF grant 507/24.} \\ Technion Israel Institute of Technology \\ \texttt{tolstreg@gmail.com} \and Yuval Filmus\footnotemark[1] \\ Technion Israel Institute of Technology \\ \texttt{yuvalfi@cs.technion.ac.il} \and
Ian Mertz \thanks{Supported by the Grant Agency of the Czech Republic under the grant agreement no. 24-10306S and by the Center for Foundations of Contemporary Computer Science
(Charles Univ. project UNCE 24/SCI/008).} \\ Charles University \\ \texttt{iwmertz@iuuk.mff.cuni.cz} \and
Alexander Smal \\ JetBrains Research \\ \texttt{avsmal@gmail.com} \and
Antoine Vinciguerra\footnotemark[1] \\ Technion Israel Institute of Technology \\ \texttt{antoine.v@campus.technion.ac.il}}
\begin{document}

\maketitle
\thispagestyle{empty}
\begin{abstract}
In a seminal work, Buhrman et al.\ (STOC 2014) defined the class $\CSPACE(s,c)$ of problems solvable in space $s$ with an additional catalytic tape of size $c$, which is a tape whose initial content must be restored at the end of the computation. They showed that uniform $\TCo$ circuits are computable in catalytic logspace, i.e., $\CL=\CSPACE(O(\log{n}), 2^{O(\log{n})})$, thus giving strong evidence that catalytic space gives $\L$ strict additional power. Their study focuses on an arithmetic model called register programs, which has been a focal point in development since then.

Understanding $\CL$ remains a major open problem, as $\TCo$ remains the most powerful containment to date. In this work, we study the power of catalytic space and register programs to compute circuits of larger depth. Using register programs, we show that for every $\epsilon > 0$,
\begin{equation*}
    \SACt \subseteq \CSPACE\left(O\left(\frac{\log^2{n}}{\log\log{n}}\right), 2^{O(\log^{1+\epsilon} n)}\right)
\end{equation*}
This is an $O(\log \log n)$ factor improvement on the free space needed to compute $\SACt$, which can be accomplished with near-polynomial
catalytic space.

We also exhibit non-trivial register programs for matrix powering, which is a further step towards showing e.g.\ $\NCt \subseteq \CL$.
\end{abstract}
\clearpage
\pagenumbering{arabic}

\section{Introduction}
\subsection{Catalytic Computation}

In the realm of space-bounded computation, the \textit{catalytic computing} framework, first introduced by Buhrman et al.\cite{buhrman2014computing}, investigates the power of having additional storage which has to be restored to its original value at the end of computation.
A catalytic Turing machine is a space-bounded Turing machine with two read-write tapes: a standard work tape of size $s$ and an additional, dedicated catalytic tape of size $c$. However, the catalytic tape begins completely filled in with some memory $\tau$, and despite being available for use as work space, the catalytic tape must be restored to its initial state $\tau$ at the end of the computation. 

While $\CSPACE(s,c)$, the class of functions solvable with work space $s$ and catalytic space $c$, clearly sits between $\SPACE(s)$ and $\SPACE(s+c)$, na\"{i}vely it might seem that catalytic space should not improve the power of machines, as was conjectured in many previous works~\cite{cook2012pebbles,liu2013pebbling,edmonds2018hardness,iwama2018read}. Unexpectedly, \cite{buhrman2014computing} showed that \textit{catalytic logspace} ($\CL=\CSPACE(\log{n}, 2^{O(\log{n})})$), the catalytic analogue of the complexity class $\L$, contains the circuit class $\TCo$, which contains non-deterministic and randomized logspace ($\NL$ and $\BPL$, respectively) as well as functions, such as determinant, widely believed to be outside both. Thus they gave a strong case for the power of catalytic memory as a resource.

Building on this result, the catalytic approach and methods has seen growing interest. Many follow-up works have sought to understand variants and properties of catalytic space have been studied, such as non-deterministic and randomized analogues~\cite{buhrman2018catalytic,datta2020randomized,cook2024structure,koucky2025collapsing}, non-uniform catalytic models~\cite{potechin2016note,robere2022amortized,cook2022trading,cook2023tree}, robustness to errors in catalytic resetting~\cite{gupta2024lossy,folkertsma2024fully}, catalytic communication protocols~\cite{pynecatalytic}, and other variants~\cite{gupta2019unambiguous,bisoyi2022pure,bisoyi2024almost} (see surveys by Kouck\'{y}~\cite{koucky2016catalytic} and Mertz~\cite{mertz2023reusing} for more discussion of these and other works).

Furthermore, applying catalytic tools  to other complexity theoretic questions has seen successful results in recent years. Two of the most successful approaches have been 1) approaches to derandomizating non-catalytic space-bounded classes~\cite{pyne2024derandomizing,doron2024opening,li2024distinguishing}; and 2) space-bounded algorithms for function composition~\cite{cook2020catalytic,cook2021encodings,cook2023tree}, which recently culminated in a breakthrough simulation by Williams~\cite{williams2025} of time with quadratically less space.

Nevertheless, the exact strength of catalytic space remains open. In their first work, \cite{buhrman2014computing} showed $\TCo \subseteq \CL \subseteq \ZPP$, with the key open problem being the relationship of $\CL$ to $\P$. Further work has shown slight progress on both ends; Cook et al.~\cite{cook2024structure} showed that $\CL$ reduces to the \textit{lossy coding} problem, which itself is in $\ZPP$, while Agarwala and Mertz~\cite{agarwala2025matching} showed that bipartite matching, a function not known to be anywhere in the $\NC$ hierarchy, can be solved in $\CL$.

The key open question in this work is whether or not $\CL$ contains higher levels of the $\NC$ hierarchy. Mertz~\cite{mertz2023reusing} posed a concrete approach to showing $\NCt \subseteq \CL$ via \textit{register programs}, which we turn to now.

\subsection{Register Programs}

Register programs were first introduced by Coppersmith and Grossmann~\cite{coppersmith1975generators}, and revived later by Ben-Or and Cleve to generalize Barrington's theorem to arithmetic circuits~\cite{cleve1988computing,cleve1990methodologies}. A register program over a ring $\mathbf{R}$ and a set of inputs $x_1,\dots,x_n\in \mathbf{R}$ is defined as a sequence of instructions $I_1,\dots,I_n\colon \mathbf{R}^s\mapsto \mathbf{R}^s$ applied to a set of registers $\{R_1,\dots,R_s\}$.

The key technique of Buhrman et al.~\cite{buhrman2014computing} was based on register programs, which they showed can be directly simulated on a catalytic machine. They constructed a register program computing the $k$th power of some value $x$ with $k$ registers and $O(1)$ accesses to the value $x$; using some extensions and other works, they thus showing that $\CL$ contains the circuit class $\TCo$. This was also the driving force behind the result of Cook and Mertz~\cite{cook2023tree}, and by extension the work of Williams~\cite{williams2025}, who showed that the Tree Evaluation problem, which was suggested as a function which would separate $\L$ from $\P$~\cite{cook2012pebbles}, can be computed in space $O(\log{n}\log\log{n})$.

As with catalytic space more generally, the power of register programs are still a mystery, and deserves to be investigated thoroughly. The approach suggested in \cite{mertz2023reusing} for $\CL$ versus $\NCt$ is to find a register program computing the $k$th power of a \textit{matrix} with a register program matching the initial result of \cite{buhrman2014computing} in the non-commutative setting. More generally, it was proposed that further algebraic extension of these and other works could be the goal to improving the strength of $\CL$.

\subsection{Our Results}

In this work, we investigate the complexity of computing polynomials in the register program framework, and make the first progress towards catalytic algorithms for circuit classes beyond $\TCo$. We present register programs for different kinds of polynomials (such as symmetric polynomials and polynomials representing Boolean functions), as well as more efficient programs for evaluating non-constant depth $d$ boolean circuits with 
a constant number of recursive calls.

From this, we deduce that the class $\SACt$ of Boolean circuits of polynomial size and depth $O(\log^2{n})$, with bounded fan-in $AND$ gates and unbounded fan-in $OR$ gates, can be computed with $o(\log^2{n})$ work space when given access to nearly polynomial catalytic space.

\begin{theorem} \label{main_sac}
For all $\epsilon > 0$, 
\begin{equation*}
    \SACt\subseteq \CSPACE\left(O\left(\frac{\log^2{n}}{\log\log{n}}\right), 2^{O(\log^{1+\epsilon}{n})}\right)
\end{equation*}
\end{theorem}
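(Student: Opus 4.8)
The plan is to combine the standard depth-reduction / divide-and-conquer structure of $\SACt$ circuits with the register-program machinery for low-depth circuits that we have developed in the earlier sections. Recall that a $\SACt$ circuit has polynomial size, depth $O(\log^2 n)$, fan-in-$2$ AND gates and unbounded fan-in OR gates. The first step is to convert each OR gate into an arithmetic sum over a suitable ring (working with $0/1$ values and replacing OR by a threshold or by a sum followed by a "clip to $\{0,1\}$'' operation, which we can realize by composing with the Boolean-function register programs from the preceding section), and each fan-in-$2$ AND gate into a product of two values. This turns the circuit into an arithmetic circuit of the same depth over $\mathbb{Z}$ (or over a large enough finite ring $\mathbb{Z}_m$ to keep register entries of polynomial bit-length), with $+$ and $\times$ gates, where every $\times$ gate has fan-in $2$ and $+$ gates have unbounded fan-in.

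Next I would slice the depth-$O(\log^2 n)$ circuit into $O(\log n / \log\log n)$ layers, each of depth $O(\log n \cdot \log\log n)$ — or, more to the point, pick a recursion where each level handles a sub-circuit of depth roughly $\log^{1+\epsilon} n$ worth of "progress'' — and apply the efficient register program for bounded-recursive-call circuits (the "more efficient programs for evaluating non-constant depth $d$ Boolean circuits with a constant number of recursive calls'' advertised in Our Results) at each level. The key accounting is that a register program simulating a depth-$d$ piece uses $2^{O(d)}$ registers and, crucially, only $O(1)$ recursive invocations, so when we nest these programs through the $O(\log n/\log\log n)$ levels the number of registers multiplies to $2^{O(\log n/\log\log n) \cdot O(\log\log n)} = 2^{O(\log n)}$ — wait, that is only quasi-polynomial if $d$ per level is $\log n$; to actually land at $2^{O(\log^{1+\epsilon} n)}$ catalytic space we take each level to have depth $\Theta(\log^{\epsilon} n \log\log n)$ so that $2^{O(d)}$ per level is $2^{O(\log^{\epsilon}n\log\log n)}$, and the number of levels is $O(\log^{2}n / (\log^{\epsilon} n \log\log n)) = O(\log^{2-\epsilon} n/\log\log n)$; the product of register counts across levels is then $2^{O(\log^{\epsilon}n\log\log n)\cdot O(\log^{2-\epsilon}n/\log\log n)} = 2^{O(\log^2 n)}$, which is too much. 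So the right split is to keep the \emph{total} register blow-up at $2^{O(\log^{1+\epsilon}n)}$ by having the recursion tree be shallow: use the constant-recursive-call program to get depth $O(\log n)$ reduction per level with only $O(1)$ multiplicative overhead in register count per level, giving $2^{O(\log n)}$ work-tape space — i.e. logarithmic free space would already compute depth $O(\log^2 n)$ if we could afford $2^{O(\log^2 n)}$ catalytic space; to bring free space down to $O(\log^2 n/\log\log n)$ at the cost of only $2^{O(\log^{1+\epsilon}n)}$ catalytic space, we interleave the Cook–Mertz-style register reuse (which trades a $\log\log$ factor in free space) with a shallow recursion.

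Concretely, then, the main technical step is: build a register program for a depth-$\log^2 n$ circuit by $O(\log^2 n/\log\log n)$-fold composition of a "block'' gadget of depth $\Theta(\log\log n)$, where the block gadget uses $\mathrm{poly}(\log n)$ registers and the composition is done in the clean way (each block called $O(1)$ times, and register indices shared across blocks via the catalytic reset discipline) so that the total register count — hence catalytic tape size — is $2^{O(\log^{1+\epsilon}n)}$, while the free space needed to bookkeep the recursion is $O(\log(\text{\#blocks}) + \log(\text{registers per block})) = O(\log^2 n/\log\log n)$ only because we never keep more than one block's worth of control state live. Finally, invoke the Buhrman et al.\ simulation of register programs on catalytic Turing machines (stated in the Register Programs subsection) to convert the register program into a $\CSPACE$ machine with the claimed parameters, taking care that the ring we chose has elements of bit-length $O(\log^{1+\epsilon} n)$ so that each register fits in that many catalytic-tape cells and the total catalytic space is (\#registers) $\times$ (bits per register) $= 2^{O(\log^{1+\epsilon}n)}$.

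The step I expect to be the main obstacle is the precise balancing of the three competing resources — free space, number of registers (catalytic space), and number of recursive calls per block — so that the block depth, the register count per block, and the nesting depth simultaneously hit $\Theta(\log\log n)$-type free space and $2^{O(\log^{1+\epsilon}n)}$ catalytic space. In particular, the $O(1)$-recursive-call property of the block gadget is essential: if a block made $\omega(1)$ recursive calls, the register count would grow super-polynomially per level and the catalytic tape would blow up past $2^{O(\log^{1+\epsilon}n)}$; so the heart of the argument is showing that a depth-$\Theta(\log\log n)$ Boolean sub-circuit can be evaluated by a register program with $\mathrm{poly}(\log n)$ registers and only a constant number of accesses to each sub-result, which is exactly the refined construction from the earlier sections, and then verifying that the reset/cleanup cost does not dominate the $O(\log^2 n/\log\log n)$ free-space budget.
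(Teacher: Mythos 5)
Your high-level plan (slice the depth-$O(\log^2 n)$ circuit into blocks of depth $\Theta(\log\log n)$, evaluate each block with a register program making only $O(1)$ recursive calls, and nest these via clean composition before invoking the Buhrman et al.\ simulation lemma) is indeed the paper's strategy, and you correctly land on the block depth $\Theta(\log\log n)$ after some false starts. However, the accounting you give is internally inconsistent in two ways that obscure the actual argument, and the key technical device is missing.

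First, the register count per block is not $\mathrm{poly}(\log n)$: it is $2^{O(\log^{1+\epsilon} n)}$. A depth-$d$ $\SAC$ sub-circuit of size $s=n^{O(1)}$ with unbounded fan-in ORs is represented by a polynomial of degree $\le 2^d$ with $\le s^{2^d}$ monomials (this is the paper's \Cref{sac_circuit_poly}), and the register program that evaluates such a polynomial with $O(1)$ recursive calls per input (via symmetric functions / Waring-style powering, as in \Cref{representation_integer_boolean_function}) requires on the order of $s^{2^d}$ registers over a prime field of comparable size. With $d=\Theta(\epsilon\log\log n)$ this is $n^{O(\log^\epsilon n)}=2^{O(\log^{1+\epsilon} n)}$. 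Since the Composition Lemma takes a $\max$ over levels, the total register count — and hence the catalytic tape size — stays $2^{O(\log^{1+\epsilon} n)}$; this is where the catalytic-space bound actually comes from, not from sharing $\mathrm{poly}(\log n)$ registers.

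Second, your expression $O(\log(\#\text{blocks}) + \log(\text{registers per block}))$ for the free space does not evaluate to $O(\log^2 n / \log\log n)$; as written it gives $O(\log\log n) + O(\log^{1+\epsilon} n)$. The correct source of the free-space bound is the running time of the register program: each of the $O(\log^2 n/\log\log n)$ composition levels multiplies the number of recursive calls by a constant (the paper's constant is $64$), so the total time is $2^{O(\log^2 n/\log\log n)}$, and the simulation lemma charges $O(\log t)$ work space. Your write-up never surfaces this exponential blowup in recursive calls, which is the entire reason the free space is $\Theta(\log^2 n / \log\log n)$ rather than, say, polylogarithmic.

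Third, the gate-by-gate arithmetization ("replace each OR by a sum followed by a clip to $\{0,1\}$") that you open with is not the right move: if you clip after each OR you pay $\Omega(1)$ recursive calls per gate, and a depth-$d$ block then costs $2^{\Omega(d)}$ recursive calls, ruining the composition. The paper avoids this by never arithmetizing gate-by-gate; instead it collapses an entire depth-$d$ sub-circuit into a single representing polynomial (low degree, many terms) and computes that polynomial in one shot. This is the missing idea, and it is the reason the block can be both $\Theta(\log\log n)$ deep and still use only $O(1)$ recursive calls to its inputs. Finally, the Cook–Mertz interpolation gadget that you allude to is not used in this theorem's proof; it appears elsewhere in the paper (in the matrix-powering section).
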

\noindent
Our technique also gives such an improvement for $\NCt$ with only polynomial catalytic space, although this can be derived more directly from the results of \cite{buhrman2014computing}. It also extends to $\ShSACt$, the arithmetic variant of $\SACt$.

Second, we show sublinear register programs to compute matrix powering, thus making initial progress towards the program of \cite{mertz2020catalytic}:
\begin{theorem}
\label{Matrix_powering_intro}
   Let $d,n,p\in \mathbb{N}$, where $p$ is prime. Let $M\in M_n(\mathbb{F}_p)$. For all $\epsilon > 0$ there is a register program that computes $M^d$ with
    \begin{itemize}
        \item 
         $O_{\epsilon}( d^{\epsilon}\log{d})$ recursive calls to $M$
         \item 
         $O_{\epsilon}\left(n^{\exp(1/\epsilon)}\right)$ basic instructions, and
        \item
      $O_{\epsilon}\left(n^{\exp(1/\epsilon)}\right)$ registers over $\mathbb{F}_{O\left(\left(np\right)^{\exp(1/\epsilon)} \right)}$.
\end{itemize}
\end{theorem}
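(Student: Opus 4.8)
The plan is to build $M^d$ recursively by the standard square-and-multiply strategy, but amortizing the number of accesses to $M$ across the recursion. Write $d$ in binary; then $M^d$ is a product of $O(\log d)$ factors of the form $M^{2^i}$, and each $M^{2^i}$ is obtained from $M^{2^{i-1}}$ by one squaring. A naive recursion of this kind would use $\mathrm{poly}(d)$ accesses to $M$, so the heart of the argument is to organize the computation as a shallow tree of branching factor roughly $d^{\epsilon}$ and depth $O(1/\epsilon)$: at each node we split the exponent into $d^{\epsilon}$ roughly-equal pieces, recursively form a register program for each piece, and then combine them with a register program for the product of $d^{\epsilon}$ matrices. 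This is where the non-commutative "clean computation" machinery of Ben--Or--Cleve comes in — one needs a register program that, given (as subroutines) programs computing matrices $A_1,\dots,A_k$ into designated registers, computes $A_1\cdots A_k$ into a target register while restoring all scratch registers, at the cost of a constant number of invocations of each $A_j$ subroutine and $\mathrm{poly}(n)$ basic instructions and registers holding $n\times n$ blocks.

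The first step is to set up this \emph{non-commutative product gadget}: represent an $n\times n$ matrix over $\mathbb{F}_p$ by $n^2$ register entries over $\mathbb{F}_p$ (or a slightly larger field if intermediate bookkeeping requires it), and implement matrix multiplication $C \pe A\cdot B$ as a sequence of $\mathrm{poly}(n)$ basic ring instructions of the Ben--Or--Cleve type, each a multiply-and-add into a register. Iterating the two-matrix gadget in a balanced binary tree gives a $k$-fold product gadget using $O(\log k)$ "multiplication depth" and a constant number of calls to each input subroutine; crucially, the clean (catalytic) discipline means the number of calls does not blow up multiplicatively with depth, only the instruction/register count does — and that count stays $\mathrm{poly}(n)$ since we only ever manipulate a constant number of $n\times n$ blocks at a time. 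The second step is the recursion on $d$: with branching factor $k = d^{\epsilon}$ and depth $t = \lceil 1/\epsilon\rceil$, so that $k^t \ge d$, the number of leaf-level accesses to $M$ is $O(k^t)$ — too many — so instead we recurse on the \emph{exponents}: a call responsible for exponent range of size $m$ produces a program that outputs $M^m$, and it makes its $k$ sub-calls on ranges of size $\approx m/k$ (for $m$ much larger than $k$) or bottoms out by direct computation when $m \le k$, where $M^m$ for $m\le d^\epsilon$ is built by $O(\log d^\epsilon) = O(\epsilon \log d)$ squarings, each a call to the product gadget. Counting: each of the $O(\log d)$ "levels" of exponent-halving contributes a factor, and a careful accounting (this is the routine calculation I will not grind through) yields $O_\epsilon(d^\epsilon \log d)$ total recursive calls to $M$.

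The tracking of the \emph{other} two parameters is then a matter of unrolling the recursion. The instruction count and register count satisfy a recurrence of the shape $S(t) \le \mathrm{poly}(n)\cdot S(t-1)$ with $t = O(1/\epsilon)$ levels of nesting in the product-gadget composition, giving $S = n^{O(1)^{1/\epsilon}} = n^{\exp(O(1/\epsilon))}$; similarly the field size must accommodate the nested gadgets' bookkeeping (the Ben--Or--Cleve construction introduces auxiliary scalars at each level), yielding a field of size $(np)^{\exp(O(1/\epsilon))}$. I expect the main obstacle to be the \emph{interaction between the two recursions}: the exponent recursion (which controls accesses to $M$) and the product-gadget composition recursion (which controls instructions, registers, and field size) must be interleaved so that the $d^\epsilon$-ary split on exponents does not also cause a $d^\epsilon$-fold blowup in instruction count — this requires that at the moment we invoke the $k$-fold product gadget, its $k$ inputs are treated as black-box subroutines whose internal cost is \emph{added}, not multiplied, into the running total, which is exactly what the clean register-program composition lemma guarantees but which must be stated and applied with care. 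A secondary subtlety is ensuring the catalytic/clean property is preserved through all levels of nesting, i.e.\ that every auxiliary register is restored, so that the whole program is a legitimate register program in the sense defined above.
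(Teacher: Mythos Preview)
Your proposal has a genuine gap, located in the claim that ``the clean (catalytic) discipline means the number of calls does not blow up multiplicatively with depth, only the instruction/register count does.'' This is exactly backwards. The Composition Lemma (\Cref{composition_lemma}) says that when you feed the output of a clean program $R_g$ (with $t_g$ recursive calls) into $R_f$ (with $t_f$ recursive calls), the composite makes $t_f t_g$ recursive calls; cleanness lets you \emph{share registers}, not share calls. Consequently, your $k$-fold product gadget built as a depth-$\log k$ binary tree of Ben--Or--Cleve two-matrix products, each costing some constant $c$ calls per input, costs $c^{\log k}=k^{\Theta(1)}$ calls, not $O(1)$. Plugging this into your recursion with $k=d^\epsilon$ and depth $1/\epsilon$ yields $(k^{\Theta(1)})^{1/\epsilon}=d^{\Theta(1)}$ calls to $M$, the same as naive repeated squaring and nowhere near $d^\epsilon$. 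Your parametrization also makes the gadget size depend on $k=d^\epsilon$, so the ``$\mathrm{poly}(n)$ basic instructions and registers'' claim fails as well.

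The paper's route is essentially different: it never treats the problem as non-commutative matrix multiplication. Instead it observes that each entry of $M^\delta$ is a degree-$\delta$ polynomial in the $n^2$ \emph{commuting} scalar entries of $M$, and uses the Waring-rank decomposition (\Cref{representation_linear_form}, \Cref{homogeneous_degree_d_linear_form}) to write each such polynomial as a short sum of $d$th powers of linear forms. This yields a program for $M^\delta$ with exactly $4$ recursive calls to $M$ \emph{for every} $\delta$, at the price of $\binom{n^2+\delta}{\delta}\le n^{O(\delta)}$ registers (\Cref{lemma_powering_all_n}). One then iterates this fixed-$\delta$ gadget $\log_\delta d$ times via \Cref{boosting_matrix_powering}, giving $4^{\,O(\log_\delta d)}=d^{\,O(1/\log\delta)}$ calls; choosing $\delta=2^{\Theta(1/\epsilon)}$ (a constant depending only on $\epsilon$, not on $d$) makes this $d^\epsilon$ while the register and instruction counts become $n^{O(\delta)}=n^{\exp(1/\epsilon)}$ over a field of size $(np)^{O(\delta)}$. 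The missing idea in your proposal is precisely this constant-call $M\mapsto M^\delta$ step coming from the commutative sum-of-powers representation; without it, no Ben--Or--Cleve style composition will beat $d^{\Theta(1)}$ calls.
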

\section{Preliminaries}
\subsection{Circuits}

A Boolean circuit is a directed acyclic graph $C$ with $0,1$-valued inputs and outputs. Internal nodes (gates) are labeled with Boolean operations from a given set $B$. The circuit size $s(C)$ is the number of nodes, and its depth $d(C)$ is the longest input-output path. While Boolean circuits had been studied earlier~\cite{shannon1949synthesis}, their importance for parallel computing gained significant attention in the late 1970s~\cite{borodin1977relating,pippenger1979simultaneous,ruzzo1981uniform,borodin1983parallel}.

\begin{definition}
We define the following circuit families over input literals 

$x_1, \ldots, x_n, \lnot x_1, \ldots, \lnot x_n$:
\begin{itemize}
    \item 
    $\NC$ circuits: fan-in 2 AND and OR gates
    \item 
    $\SAC$ circuits: fan-in 2 AND gates and unbounded fan-in OR gates\footnote{Since $\SAC^i$ is closed under complement for all $i \geq 1$~(see~\cite{borodinCDRT89}), these fan-ins can also be reversed, but for our argument we specifically use this version.}
    \item 
    $\AC$ circuits: unbounded fan-in AND and OR gates
    \item 
    $\TC$ circuits: unbounded fan-in threshold gates
\end{itemize}
We denote by $\NC^i$ ($\SAC^i$, $\AC^i$, $\TC^i$) the family of functions computable by $\NC$ ($\SAC$, $\AC$, $\TC$) circuits of polynomial size and $O(\log^i n)$ depth. By $\NC$ we denote $\bigcup_{i \in \mathbb{N}} \NC^i$ and similarly for $\SAC$, $\AC$, and $\TC$.
\end{definition}

\noindent
The relations between these different classes has been extensively studied~\cite{ruzzo1981uniform,vollmer1999introduction}. For all $i$, we have the following relations:
\[\NC^i \subseteq \SAC^i \subseteq \AC^i  \subseteq \TC^i \subseteq \NC^{i+1}\]
As a consequence, we have that
\[\NC = \SAC = \AC = \TC\]
Furthermore, other lines of research~\cite{borodin1977relating,cook1983classification,venkateswaran1992circuit} have established other relationships:
\[ \NCo \subseteq \L  \subseteq \NL  \subseteq \SACo \subseteq \ACo \subseteq \TCo \subseteq \NCt \subseteq \SACt \subseteq \cdots \subseteq \NC \subseteq \P \]
While all containments are widely conjectured to be strict, no separations are known.

\subsection{Catalytic Computation}
Central to this work is the notion, introduced by Buhrman et al.~\cite{buhrman2014computing}, of catalytic space.

\begin{definition}
    A \emph{catalytic Turing machine} with work space $s := s(n)$ and catalytic space $c := c(n)$ is a space-bounded Turing machine $M$ with access to two read-write tapes: a standard work tape of size $s$, and a \emph{catalytic} tape of size $c$. The catalytic tape has the additional restriction that for every $\tau \in \{0,1\}^c$, if we run $M$ on any input $x$ with the catalytic tape in initial configuration $\tau$, the catalytic tape has final configuration $\tau$ as well.
\end{definition}

\noindent
This definition gives rise to natural complexity classes:

\begin{definition}
    We define $\CSPACE(s,c)$ as the class of problems solvable by a catalytic Turing machine with a work tape of size $s$ and a catalytic tape of size $c$. Furthermore, we denote \emph{catalytic logspace} to be $\CL = \CSPACE(O(\log n), 2^{O(\log n)})$.
\end{definition}

\subsection{Register Programs}

Inspired by Ben-Or and Cleve's work on straight-line programs~\cite{benor1992register,cleve1988computing},  Buhrman et al.~\cite{buhrman2014computing} investigated the potential for optimization, examining a more generic type of computational model due to Coppersmith and Grossman~\cite{coppersmith1975generators}.

\begin{definition}
    Let $\mathbf{R}$ be a ring. An \emph{$\mathbf{R}$-register program} over input $x_1,\dots,x_n$ with space $s$ is defined as a sequence of instructions $I_1,\dots,I_t$ applied to a set of registers $R_1,\dots,R_s$, where each instruction $I_k$
    has one of the following two forms:
    \begin{itemize}
       \item 
       \emph{Basic instruction}: updating a register $R_i$ with a polynomial $p_k$ over the other registers:
        \[
        I_k\colon R_i\leftarrow R_i + p_k(R_1,\dots,R_{i-1},R_{i+1},\dots,R_n)
        \] 
        \item 
        \emph{Input access}/\emph{Recursive call}: adds the input (scaled by some $\lambda\in \mathbf{R}$) to one register $R_i$:
        \[ I_k\colon R_i\leftarrow R_i+\lambda x_j\]
    \end{itemize}
    The time $t$ of the register program is the number of instructions $I_k$.
\end{definition}

\noindent
We note that we use the term input access when we are given direct access to the input $x$, while we use the term recursive call when we are designing a subroutine whose input is being received not from the global function but rather from some preceding intermediate computation. We will make this distinction clear where necessary.

With regards to such subroutines, it will be important to restrict our register programs to be of a form amenable to composition:

\begin{definition}
A register program is called \textit{clean} if, assuming all registers are initialized to some initial value, $\tau_i\in\mathbf{R}$, applying the sequence of instructions has the following effect:
\begin{itemize}
    \item 
    There is a subset of registers $S$ such that for all register $R_i\in S$, $R_i=\tau_i+ \delta_i$.
    \item 
    For every other register $R_j$, $R_j=\tau_j$.
\end{itemize}
\end{definition}

\noindent
The total time and space usage of such composed programs can thus be directly analyzed:

\begin{lemma}[Composition Lemma]
\label{composition_lemma}
    Let $f\colon \mathbf{R} \rightarrow \mathbf{R}$ and $g\colon \mathbf{R} \rightarrow \mathbf{R}$ be functions. Let $R_f$ and $R_g$ be clean register programs computing $f$ and $g$ with $t_f$ and $t_g$ recursive calls, $s_f$ and $s_g$ basic instructions, and $r_f$ and $r_g$ registers, respectively.
    Then there exists a register program $R_{f\circ g}$ that computes $f\circ g$ with
    \begin{itemize} 
        \item $t_ft_g$ recursive calls,
        \item $s_f + t_f s_g$ basic instructions, and
        \item $\max{\{r_f, r_g\}}$ registers.
    \end{itemize}
\end{lemma}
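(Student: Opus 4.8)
The plan is to prove the lemma by \emph{inlining}: take the program $R_f$ and replace each of its $t_f$ recursive calls by a fresh, relabeled copy of $R_g$, so that a call of the form $R_i \leftarrow R_i + \lambda\cdot(\text{input})$ in $R_f$ is simulated by a run of $R_g$ that adds $\lambda g(x)$ to register $R_i$, where $x$ now denotes the genuine global input of the composed program. Concretely, for the $k$-th recursive call $R_{i_k} \leftarrow R_{i_k} + \lambda_k\cdot(\text{input})$ of $R_f$ we substitute a copy of $R_g$ whose registers have been renamed by a bijection that sends $R_g$'s designated output register to $R_{i_k}$ and sends the remaining $r_g - 1$ registers of $R_g$ arbitrarily into the common register bank $\{R_1,\dots,R_{\max\{r_f,r_g\}}\}$; the recursive calls inside this copy of $R_g$ (which access $x$) become recursive calls of the composed program, and the scalar $\lambda_k$ is threaded into this copy so that its output register accumulates $\lambda_k g(x)$ rather than $g(x)$.

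First I would argue correctness. Run the composed program from arbitrary initial register values and track the state of the ``$R_f$-part'' of the registers. The only instructions that differ from $R_f$ are the inlined copies of $R_g$. When execution reaches the $k$-th inlined copy, the registers it uses as scratch hold whatever $R_f$ has currently placed there --- possibly garbage, possibly $R_f$'s own partial output. But $R_g$ is \emph{clean}: applied to \emph{any} initial values it leaves every non-output register exactly as it found it and adds $g$ of its input to its output register. Hence the net effect of the inlined copy is precisely $R_{i_k} \leftarrow R_{i_k} + \lambda_k g(x)$ and nothing else, which is exactly the effect the corresponding recursive call of $R_f$ would have had if its input had literally been the value $g(x)$. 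Proceeding call by call, the composed program ends in the same configuration $R_f$ would reach on input $g(x)$; since $R_f$ computes $f$ cleanly, its output register holds $\tau_{\mathrm{out}} + f(g(x))$ and all other registers are restored, so the composed program computes $f\circ g$ and is itself clean.

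Then the resource count is immediate: $R_f$'s $s_f$ basic instructions are untouched, and each of its $t_f$ recursive calls is replaced by a copy of $R_g$ contributing $t_g$ recursive calls and $s_g$ basic instructions, giving $t_f t_g$ recursive calls and $s_f + t_f s_g$ basic instructions; since every copy of $R_g$ and all of $R_f$'s own registers live in the bank of size $\max\{r_f,r_g\}$, that many registers suffice.

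I expect the main obstacle --- really the only nontrivial point --- to be the correctness argument at the moment of inlining: one must be convinced that overlapping $R_g$'s scratch space with the live registers of $R_f$ (which is exactly what lets us stay within $\max\{r_f,r_g\}$ registers instead of $r_f+r_g$) does no damage, and this is precisely what the cleanliness hypothesis on $R_g$ buys us. A secondary point to handle carefully is the bookkeeping of the per-call scalar $\lambda_k$ --- confirming it can be absorbed into the copy of $R_g$ without spending extra instructions or registers --- together with the observation that $R_f$'s own output register may safely double as scratch for an inlined $R_g$, again because $R_g$ restores it.
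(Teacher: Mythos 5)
Your proposal is correct and follows essentially the same route as the paper's proof: inline a (register-renamed) copy of the clean program $R_g$ at each of $R_f$'s recursive calls, using cleanliness of $R_g$ to share scratch registers with $R_f$ so that $\max\{r_f,r_g\}$ registers suffice. Your write-up is somewhat more explicit than the paper's about why the overlap is harmless and about threading the per-call scalar $\lambda_k$, but the underlying argument and resource accounting are the same.
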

\begin{proof}
    Let $x\in \mathbf{R}$ and let $y=g(x)$, so $f(g(x))=f(y)$. We can cleanly compute $f(y)$ using $R_f$, with $t_f$ recursive calls to $y$, $s_f$ basic instructions, an input register and $r_f-1$ additional registers. On the other hand, we can cleanly compute $y=g(x)$ into the input register of $R_f$ using $t_g$ recursive calls and $s_g$ basic instructions for each of the $t_f$ recursive calls made to $g$.
    
    There are now two cases:
    \begin{itemize}
        \item 
        If $r_g-1 <r_f-1$, since $R_g$ cleanly computes $g$, we can use the  additional registers of $R_f$ to compute $y$.
        \item 
        If $r_g-1 >r_f-1$, we add $r_g-r_f$ registers and compute $y$ using the latter registers as well as the additional registers of $R_f$.
    \end{itemize}
    Thus we can compute $f(g(x))$ with $t_ft_g$ recursive calls to $x$ and $\max{\{r_f, r_g\}}$ registers.
\end{proof}

Lastly we connect clean register programs to catalytic computation:

\begin{lemma}[Lemma 15 in \cite{buhrman2014computing}]
\label{from_register_to_turing_machine}
    Any (uniform) clean register program of time $t$, space $s$, and with $n$ inputs over a finite ring $\mathbf{R}$ can be simulated by a catalytic Turing machine in pure space $O(\log t +\log n+ \log |\mathbf{R}|) $ and catalytic space $O(s\log |\mathbf{R}|)$.
\end{lemma}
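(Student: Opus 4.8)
The plan is to let the catalytic tape hold the $s$ registers, execute the $t$ instructions directly on it, and then replay them in reverse: cleanness of the program takes care of ``almost'' restoring the tape, and reversibility of each instruction takes care of the rest. Write $q := |\mathbf R|$ and let $\tau \in \{0,1\}^c$ be the initial catalytic content. Instead of cutting the tape into fixed-width blocks — which manufactures illegal codewords as soon as $q$ is not a power of two — I would read $\tau$ as a single integer $N \in [0, 2^c)$ and take its base-$q$ digits $a_0, a_1, \dots$; fixing once and for all a bijection $\iota$ between $\{0,\dots,q-1\}$ and $\mathbf R$, the lowest $s$ digits become the register contents $R_1,\dots,R_s$ (this needs $c \ge s\lceil\log_2 q\rceil$, well within the allotted catalytic space $O(s\log|\mathbf R|)$, and any surplus tape cells are simply never touched, hence trivially restored). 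The virtue of this encoding is threefold: \emph{every} string $\tau$ is now a legal configuration, so correctness will not hinge on the adversarial initial content; the update ``replace digit $a_i$ by the digit coding $\iota^{-1}(a_i)+r$'' changes $N$ by $(\mathrm{new}-\mathrm{old})q^i$ with $|\mathrm{new}-\mathrm{old}| < q$, hence is carry-free and therefore reversible (undo it by adding $-r$); and that update is computable in space $O(\log c + \log q)$, since extracting and rewriting the $i$-th base-$q$ digit of a $c$-bit integer is just integer division/multiplication by a power of $q$, which is space-efficient.

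With the encoding fixed, the simulation is a single loop. I would keep a counter $k$ running from $1$ to $t$ (space $O(\log t)$); at step $k$ I invoke the uniformity machine to generate the description of $I_k$ — this is exactly where uniformity is used, and it costs $O(\log t + \log n)$ space — and then perform $I_k$ on the tape. For a basic instruction $R_i \leftarrow R_i + p_k(R_{\ne i})$ I would evaluate $p_k$ by reading the relevant register digits one at a time and accumulating the result in $\mathbf R$ (for the usual instructions, a bounded sum of bounded-degree monomials, this needs only $O(\log|\mathbf R|)$ scratch; in general this is precisely the requirement that the $p_k$ be space-efficiently evaluable, which is part of the model), and then add that ring element to digit $i$. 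For an input access / recursive call $R_i \leftarrow R_i + \lambda x_j$ I look up $x_j$ in space $O(\log n)$ and add $\lambda x_j$ to digit $i$. Before the loop begins I copy onto the work tape the initial digit $a_{i^*}$ of the designated output register ($O(\log|\mathbf R|)$ bits).

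When the loop finishes, cleanness of the program — taking the distinguished set $S$ to be the single output register $R_{i^*}$ with $\delta_{i^*}=f(x)$, as a clean program computing $f$ permits — says that every register is back to its initial value except $R_{i^*}$, which now holds $a_{i^*}+f(x)$; I read digit $i^*$ off the tape, subtract the stored $a_{i^*}$ in $\mathbf R$, and write $f(x)$ to the output. Then I would run the loop in reverse, $k = t$ down to $1$, regenerating each $I_k$ and undoing it — adding $-p_k(R_{\ne i})$, respectively $-\lambda x_j$, to digit $i$. Undoing a basic instruction is legitimate because $p_k$ does not depend on $R_i$, and since each forward step modified only its target register, when we reach step $k$ in the reverse pass every register other than $R_i$ is in exactly the state it had just before $I_k$ ran forward, so the recomputed $p_k$ value matches. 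After the reverse pass $N$ equals its original value and the catalytic tape equals $\tau$. The work tape has held a counter of $O(\log t)$ bits, $O(\log n)$ bits for input indexing and for driving the uniformity machine, and $O(\log|\mathbf R| + \log c)$ bits for ring arithmetic and digit extraction — which, since $\log c = O(\log s + \log\log|\mathbf R|)$ and the program uses at most polynomially many registers, is $O(\log t + \log n + \log|\mathbf R|)$ — while the catalytic tape has used $O(s\log|\mathbf R|)$ cells.

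The one step needing genuine care — the main obstacle — is getting the encoding right in the first step: it must make the simulation correct and the tape exactly restorable for \emph{every} initial content $\tau$, while keeping each register update simultaneously reversible and executable within an $O(\log|\mathbf R| + \log c)$-bit scratch window. Reading the whole tape as one base-$|\mathbf R|$ integer, rather than as fixed-width blocks, is what reconciles these demands (using that integer division is space-efficient); everything after that is bookkeeping together with an appeal to cleanness and to the fact that each instruction writes to only one register.
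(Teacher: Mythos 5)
Your overall architecture is the standard one and is sound in outline: store the registers on the catalytic tape, save $\tau_{\mathrm{out}}$ on the work tape, run the program forward, use cleanness to read off $f(x) = R_{\mathrm{out}} - \tau_{\mathrm{out}}$, then restore (you do a reverse pass; the usual shortcut is simply $R_{\mathrm{out}} \leftarrow R_{\mathrm{out}} - f(x)$, but both work). The bookkeeping — uniformity supplying $I_k$ in $O(\log t+\log n)$ space, $O(\log|\mathbf R|)$ scratch for ring arithmetic, $O(\log c)$ for tape indexing, and the observation that the reverse pass recomputes each $p_k$ on the correct register values because $I_k$ touches only $R_i$ and $p_k$ does not read $R_i$ — is also fine.

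The gap is in the encoding, precisely the step you flagged as the crux. Reading $\tau$ as $N\in[0,2^c)$ and taking base-$q$ digits does make every \emph{initial} string legal, but it does not make every register \emph{update} legal. Write $N = a_{\mathrm{high}}\,q^s + a_{\mathrm{low}}$ with $a_{\mathrm{low}} = N \bmod q^s$. Your updates replace $a_{\mathrm{low}}$ by some other value in $[0,q^s)$ while leaving $a_{\mathrm{high}}$ fixed, giving $N' = a_{\mathrm{high}}\,q^s + a'_{\mathrm{low}}$. When $q$ is not a power of two, $q^s \nmid 2^c$, so for the maximal $a_{\mathrm{high}} = \lfloor(2^c-1)/q^s\rfloor$ we have $(a_{\mathrm{high}}+1)q^s > 2^c$, and there are $a'_{\mathrm{low}}<q^s$ with $N'\ge 2^c$: the new content does not fit on the tape, and reducing it mod $2^c$ silently corrupts $a_{\mathrm{high}}$ (so the update is not "carry-free" in binary, and the simulation computes garbage and fails to restore $\tau$). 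A concrete instance: $q=3$, $s=1$, $c=2$, $\tau = 11$, so $N=3$, $a_{\mathrm{high}}=1$, $a_0=0$; adding $1\in\mathbb Z_3$ to the register gives $N'=4\notin[0,4)$. So the sentence ``every string $\tau$ is a legal configuration'' is true but insufficient — you need every \emph{reachable} configuration to be legal, which your encoding does not guarantee. Note that the paper applies the lemma over $\mathbb F_p$ for odd primes $p$, so this is not a corner case you can wave away.

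The fix is small but must be stated. Let $m = \lfloor 2^c/q^s\rfloor$. If $N < mq^s$ then $a_{\mathrm{high}} < m$ and $N' < mq^s \le 2^c$ always, so your argument goes through. If $N \ge mq^s$, store a one-bit flag on the work tape and replace $N$ by $N - mq^s \in [0,2^c - mq^s)\subset[0,q^s)$; run the entire forward/extract/reverse simulation on this shifted value (throughout, the tape holds an integer $<q^s\le 2^c$); at the very end the reverse pass restores $N-mq^s$ and the flag tells you to add $mq^s$ back. This costs $O(1)$ extra work bits plus $O(\log c)$ scratch, so the claimed $O(\log t+\log n+\log|\mathbf R|)$ work space and $O(s\log|\mathbf R|)$ catalytic space survive. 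Equivalent patches (e.g.\ a per-register carry residue occupying extra bits, or widening the tape alphabet) also work; what cannot be omitted is some explicit handling of the boundary configurations.
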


\subsection{Polynomial Representation}
\label{polynomial_representation}
The question of representing Boolean functions as multivariate polynomials over a ring has been intensively studied and has proven to be a useful tool in circuit complexity (see \cite{beigel1993polynomial} for  a~survey). We will consider two kinds of representation: one that we will call the representation of $f$, and the other the weak representation of $f$. 

\begin{definition}
    Let $P$ be a polynomial on $n$ variables over a ring $\mathbf{R}$, and let $f$ be a Boolean function with $n$ inputs.
    We say $P$ \emph{represents} $f$ if for all inputs $x\in \{0,1\}^n$, we have $P(x)=0$ if and only if $f(x)=0$.
    We say $P$ \emph{weakly represents} $f$ if there exist two disjoint sets $S_0,S_1\subset \mathbf{R}$ such that for all inputs $x\in \{0,1\}^n$, we have $f(x)=b$ if and only if $P(x)\in S_b$, for $b\in\{0,1\}$.
\end{definition}
Let us underline the difference between these two definitions of representation with an example. Let us consider the $n$-ary AND function. Observe that the polynomial $P(X_1,\dots,X_n)=\sum_{i=1}^n X_i$ over any $\mathbb{Z}_m$, where $m>n$, weakly computes AND. Indeed, we can take $S_0=\{1,\dots,n-1\}$ and $S_1=\{n\}$. 
On the other hand, $P$ does not represent AND, and in general, it is known that AND cannot be represented by a degree $1$ polynomial.

\begin{definition}
    We will say that a Boolean function $f$ has a (weak) $(\mathbf{R},d)$-representation if there is a degree $d$ polynomial which (weakly) represents it. We also define $(\mathbf{R},d,s)$-representation, where additionally the polynomial is required to have at most $s$ monomials.
\end{definition}

\section{Register Programs for Polynomials}

\subsection{Computing Univariate Polynomials}

In order to prove $\TC^1\subseteq \CL$, \cite{buhrman2014computing} design a register program to compute $x^n$ for any element $x$ in a commutative field. We state a straightforward generalization of their lemma and corresponding program to compute arbitrary univariate polynomials:

\begin{lemma}
\label{univariate_polynomial}
Let $p\in \mathbb{N}$ be a prime number, and let $P\in \mathbb{F}_p[X]$ be a univariate polynomial of degree at most $n$. For all $x\in \mathbb{F}_p$, there is a register program that computes $P(x)$ with
\begin{itemize}
    \item $4$ recursive calls to $x$,
    \item $2n+2$ basic instructions, and
    \item $n+2$ registers.
\end{itemize}
\end{lemma}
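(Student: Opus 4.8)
The plan is to adapt the Ben-Or–Cleve / Buhrman et al.\ trick for computing $x^n$ to an arbitrary polynomial $P(X) = \sum_{i=0}^n a_i X^i$. Recall the idea: one keeps a register $R$ holding a running partial evaluation, and one uses the Horner-like identity together with the fact that multiplication by a constant and addition of $\lambda x$ into a register are both available as basic operations or recursive calls. Concretely, I would first set up a ``Horner sweep'': maintain registers $R_0, \dots, R_n$ (or reuse a constant number of them more cleverly), and observe that if some register currently holds $y$, then one recursive call of the form $R_i \leftarrow R_i + \lambda x$ followed by basic polynomial updates of the form $R_j \leftarrow R_j + p(\dots)$ lets us push the computation of $P(x)$ from the ``inside out'': $P(x) = a_0 + x(a_1 + x(a_2 + \cdots + x\,a_n)\cdots)$. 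The subtlety, exactly as in \cite{buhrman2014computing}, is that a register program cannot overwrite $R_i$ with an arbitrary value — it can only add to it — so to get a \emph{clean} program we must run a ``forward'' computation into some target register and then a ``backward'' pass that cancels all the garbage in the auxiliary registers, restoring them to their initial $\tau_i$.

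The key steps, in order, are: (1) Write $P(x)$ via Horner's rule so that it is built from $n$ successive steps, each of which multiplies the accumulator by $x$ and adds a constant $a_i$; since $a_i$ is a constant in $\mathbb{F}_p$, ``add $a_i$'' is a basic instruction and ``multiply the accumulator by $x$'' is the one genuinely nonlinear step that forces use of $x$. (2) Implement each multiply-by-$x$ step using the standard two-register gadget: to compute $R \leftarrow R \cdot x$ cleanly one uses a helper register and a constant number (here the count $4$ suggests two ``halves'' — a forward and a reverse pass, each making $2$ calls, or the classical ``$+x$, subtract, $+x$, subtract'' pattern) of recursive calls to $x$, so that over the whole Horner sweep the calls to $x$ telescope and only $O(1)$ of them survive rather than $\Theta(n)$. (3) Count: the Horner sweep has $n$ constant-addition basic instructions, the cleanup reverse pass roughly doubles this to $2n$, plus $O(1)$ setup/teardown instructions giving $2n+2$; the registers are the $n$ accumulator slots used across the sweep plus a helper register and an input register, giving $n+2$; and the recursive calls collapse to $4$ by the telescoping/reversal structure. (4) Verify cleanliness: check that after the forward-then-backward execution, every register except the designated output register is back to its initial value, which is precisely the clean-program condition, so that Lemma~\ref{composition_lemma} and Lemma~\ref{from_register_to_turing_machine} will apply downstream.

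I expect the main obstacle to be step (2) combined with the recursive-call accounting: making the number of accesses to $x$ a genuine constant (independent of $n$ and of the $a_i$) while still computing a degree-$n$ polynomial cleanly. The naive Horner implementation uses $\Theta(n)$ multiplications by $x$ and hence $\Theta(n)$ recursive calls; the cleverness of \cite{buhrman2014computing} is to arrange the accumulator registers so that the ``$\times x$'' steps are shared — essentially one feeds $x$ in once, lets it propagate through all $n$ registers via basic (constant-coefficient) instructions, and feeds $x$ in again (with sign flips) to clean up, so only a fixed handful of $x$-accesses occur. Getting this bookkeeping exactly right — in particular ensuring the intermediate garbage really does cancel and that the polynomial updates $p_k$ in the basic instructions are well-defined (they may not reference $R_i$ itself) — is the delicate part; everything else is routine counting. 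I would therefore organize the proof as: state the register layout, give the explicit instruction sequence in two phases (forward evaluation, backward cleanup), prove by a short invariant argument that register $R_{\text{out}}$ ends holding $P(x)$ and all others are restored, and then read off the three bounds $4$, $2n+2$, $n+2$.
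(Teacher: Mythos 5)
Your high-level architecture --- a forward pass that pollutes auxiliary registers followed by a backward pass that cleans them up, with only $O(1)$ net accesses to $x$ --- is exactly right, and your final paragraph even gestures at the correct mechanism (``feed $x$ in once, let it propagate through all $n$ registers via basic instructions, feed $x$ in again to clean up''). But the algebraic decomposition you actually commit to in steps (1)--(2), namely Horner's rule $P(x) = a_0 + x(a_1 + x(a_2 + \cdots))$, is the wrong starting point and the gap it creates is never closed. Horner's rule is intrinsically sequential: each of the $n$ stages requires a genuine multiplication of the accumulator by $x$, and a basic instruction can only \emph{add} a polynomial in the \emph{other} registers to $R_i$ --- it cannot multiply $R_i$'s current contents by $x$. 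So a Horner sweep would require $\Theta(n)$ separate recursive calls, and your claim that they ``telescope'' or ``collapse to $4$ by the reversal structure'' is exactly the step you identify as the obstacle and never actually justify.

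The paper's construction sidesteps Horner entirely. After the single recursive call $R_{in} \leftarrow R_{in} + x$, it uses the substitution $x = (\tau_{in}+x) - \tau_{in}$ and the resulting binomial expansion
\[
P(x) = \sum_{i=0}^n a_i \sum_{j=0}^i \binom{i}{j} (\tau_{in}+x)^j (-\tau_{in})^{i-j},
\]
so that $P(x)$ becomes a \emph{linear} combination (with coefficients depending only on the $\tau$'s) of the powers $(\tau_{in}+x)^j$. These powers are \emph{not} Horner accumulators: they are computed in parallel into $n$ dedicated registers $R_j \leftarrow R_j + R_{in}^j$, all as basic instructions depending on the single register $R_{in}$, with no further access to $x$. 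Then $P(x)$ (plus $x$-independent noise) is deposited into $R_{out}$ as one basic polynomial instruction in the $R_j$'s, and the remaining three recursive calls ($-x$, then $+x$, then $-x$) restore $R_{in}$, allow the $R_j$'s to be cleaned back to $\tau_j$, and finally the noise is subtracted once the $R_j$'s hold their original $\tau_j$ values. This is the identity that makes $4$ recursive calls possible, and it is precisely what is missing from your argument. To repair the proposal, replace the Horner decomposition with this binomial re-expansion and use the $R_j$'s to hold $(\tau_{in}+x)^j$, not partial Horner sums.
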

\begin{proof}

Let $P=\sum_{i=0}^n a_i X^i$ for some coefficients $a_i$. Let $R_{in}$ be the input register initially equal to $\tau_{in}$. It is straightforward to show, by writing $x$ as $(\tau_{in} + x) - \tau_{in}$, that
\begin{equation*}
    P(x)= \sum_{i=0}^n a_i \sum_{j=0}^n \binom{i}{j} (\tau_{in}+x)^j(-\tau_{in})^{i-j} 
\end{equation*}
We will use this equation as well as the register program to compute $x^n$ in Lemma $4$ of \cite{buhrman2014computing}, to construct a register program to compute $P(x)$, given in \Cref{fig:program}.

\begin{figure}
    \begin{lstlisting}
Registers: 
$R_{in}= \tau_{in}$
$R_1=\tau_1,\dots,R_n=\tau_n$
$R_{out}=\tau_{out}$

$R_{in} \leftarrow R_{in}+x$ // $R_{in}=\tau_{in}+x$

For $1\leq i\leq n $
    $R_i \leftarrow R_{i}+ R_{in}^i$ // $R_i=\tau_i+(\tau_{in}+x)^i$
    
$R_{in} \leftarrow R_{in}-x$ // $R_{in}=\tau_{in}$

For $1\leq i\leq n $
    $R_{out} \leftarrow R_{out}+ a_i\left({(-1)}^{i}R_{in}^i+\sum_{j=1}^i \binom{i}{j} {(-1)}^{i-j} R_{j}R_{in}^{i-j}\right) $ 
// $R_{out}= \tau_{out}+ P(x)-a_0 + \sum_{i=1}^n a_i \sum_{j=1}^i \binom{i}{j} (-1)^{i-j} \tau_j\tau_{in}^{i-j}$

$R_{in} \leftarrow R_{in}+x$ // $R_{in}=\tau_{in}+x$
$R_i \leftarrow R_{i}- R_{in}^i$ // $R_i=\tau_i$
$R_{in} \leftarrow R_{in}-x$ // $R_{in}=\tau_{in}$

For $1\leq i\leq n $
    $R_{out} \leftarrow R_{out}- a_i\sum_{j=1}^i \binom{i}{j} (-1)^{i-j} R_{j}R_{in}^{i-j} $ 
// $R_{out}= \tau_{out}+ P(x)-a_0$

$R_{out} \leftarrow R_{out}+a_0$ // $R_{out}= \tau_{out}+ P(x)$
\end{lstlisting}
\caption{Program for computing a polynomial $P(x)$ of degree $n$ using $4$ recursive calls to $x$, $2n+2$ basic instructions, and $n+2$ registers}
\label{fig:program}
\end{figure}
\end{proof}

As discussed in \cite{buhrman2014computing,cook2023tree}, this program can be adapted to compute a set of polynomials $\{P_1,\dotsc, P_{\ell}\}$ with similar parameters:
\begin{lemma}
\label{univariate_polynomial_set}
Let $p\in \mathbb{N}$ be a prime number, and let $P_1 \ldots P_{\ell}\in \mathbb{F}_p[X]$ be univariates polynomial of degree at most $n$. For all $x\in \mathbb{F}_p$, there is a register program $\mathbf{UP}_{P_1,\dotsc, P_{\ell}}$ that computes $P_1(x) \ldots P_{\ell}(x)$ with
\begin{itemize}
    \item $4$ recursive calls to $x$,
    \item $2n+2\ell$ basic instructions, and
    \item $n+1+\ell$ registers.
\end{itemize}
\end{lemma}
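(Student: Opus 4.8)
The plan is to observe that the program of \Cref{fig:program} separates cleanly into a polynomial-independent \emph{skeleton} and a polynomial-dependent \emph{payload}, and that only the payload must be replicated when we pass from one polynomial to $\ell$ of them. The skeleton consists of the four recursive calls, which merely toggle $R_{in}$ between $\tau_{in}$ and $\tau_{in}+x$, together with the two length-$n$ loops that write the powers $(\tau_{in}+x)^i$ into $R_1,\dots,R_n$ and later erase them; none of these $2n$ basic instructions or $4$ recursive calls depends on which polynomial is being evaluated. The payload for a single $P=\sum_{i=0}^n a_iX^i$ is exactly two basic instructions on $R_{out}$: one performed while $R_1,\dots,R_n$ hold the powers (and $R_{in}=\tau_{in}$), which leaves $R_{out}=\tau_{out}+P(x)+\Gamma$ for a ``garbage'' term $\Gamma$ that depends only on $\tau_{in},\tau_1,\dots,\tau_n$, and one performed after $R_1,\dots,R_n$ have been restored to $\tau_j$, which subtracts $\Gamma$ again. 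Each of these is a single instruction because a basic instruction may add an arbitrary polynomial in the other registers, and the constant term $a_0$ can be folded into the first of the two.

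Given this decomposition I would build $\mathbf{UP}_{P_1,\dots,P_\ell}$ by replacing the single output register with $R_{out}^{(1)},\dots,R_{out}^{(\ell)}$, giving $1+n+\ell=n+1+\ell$ registers, and then running the skeleton once, inserting the $\ell$ payloads at the two appropriate points. Concretely: run the ``compute powers'' block ($1$ recursive call, $n$ basic instructions) and the recursive call that restores $R_{in}$; then execute, for each $m$, the single instruction
\[
R_{out}^{(m)} \leftarrow R_{out}^{(m)} + a_0^{(m)} + \sum_{i=1}^{n} a_i^{(m)}\Bigl((-1)^i R_{in}^i + \sum_{j=1}^{i}\binom{i}{j}(-1)^{i-j} R_j R_{in}^{i-j}\Bigr);
\]
then run the ``uncompute powers'' block ($2$ recursive calls, $n$ basic instructions); and finally execute, for each $m$, the correcting instruction $R_{out}^{(m)} \leftarrow R_{out}^{(m)} - \sum_{i=1}^{n} a_i^{(m)}\sum_{j=1}^{i}\binom{i}{j}(-1)^{i-j} R_j R_{in}^{i-j}$. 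This uses $4$ recursive calls, $2n+2\ell$ basic instructions, and $n+1+\ell$ registers, as claimed.

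Correctness and cleanness then reduce to the single-polynomial analysis already carried out for \Cref{univariate_polynomial}: by the identity $P(x)=\sum_i a_i\sum_j\binom{i}{j}(\tau_{in}+x)^j(-\tau_{in})^{i-j}$, the first payload instruction for $P_m$ sets $R_{out}^{(m)}=\tau_{out}^{(m)}+P_m(x)+\Gamma_m$, and since $R_1,\dots,R_n,R_{in}$ are back in the state $\tau_1,\dots,\tau_n,\tau_{in}$ when the second payload instruction for $P_m$ runs, that instruction subtracts exactly $\Gamma_m$; every register other than the $R_{out}^{(m)}$ returns to its initial value, so the program is clean with modified set $\{R_{out}^{(1)},\dots,R_{out}^{(\ell)}\}$. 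I do not anticipate a real obstacle here: the only point requiring care is the ordering --- all $\ell$ ``first'' payload instructions must lie inside the window where the $R_i$ hold the powers and all $\ell$ ``second'' ones must come after the powers are uncomputed --- together with the observation that each $\Gamma_m$ is read off from registers in the same state at both of its relevant moments, so interleaving the $\ell$ payloads causes no interference and, crucially, no duplication of the $\Theta(n)$-cost skeleton.
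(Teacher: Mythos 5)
Your proof takes essentially the same approach as the paper: keep the power-computation and restoration skeleton shared, and replicate only the $R_{out}$ instructions into $\ell$ separate output registers, one per polynomial. You are even slightly more careful than the paper's one-sentence sketch, explicitly folding each $a_0^{(m)}$ into the first payload instruction so that the count comes out to exactly $2n+2\ell$ rather than $2n+3\ell$.
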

\begin{proof}
    The program is the same as that of \Cref{univariate_polynomial}, but with each
    instruction involving $R_{out}$ replaced by $\ell$ instructions of the same form,
    where each involves a different output register $R_{out,j}$ and the corresponding
    polynomial $P_j(x)$.
\end{proof}

\subsection{Computing Multivariate Polynomials}
\label{polynomials_waring_rank}
In the last section we showed that computing powers of elements in a commutative ring can be done with a constant number of recursive calls. Moreover, it should be clear that computing the sum of variables is an easy operation: we simply add each variable to a fixed output register in turn (see~\cite{buhrman2014computing}).

Thus our goal is to represent our polynomial in such a way that we only have to perform addition and powering operations. Here our discussion of representations in \Cref{polynomial_representation} comes into play:

\begin{theorem}[\cite{schinzel2002decomposition,bialynicki2008representations}]
   \label{representation_linear_form}
   Let $P\in \mathbb{F}_p[x_1,\dotsc,x_n]$ be a homogeneous polynomial of degree $d<p$. There exist $m\in \mathbb{N}$, $m$ elements $\alpha_i$ and $nm$ elements $\beta_{i,j}$ such that:
\begin{equation*}
    P(x_1,\dots,x_n)= \sum_{i=1}^m \alpha_i{\left(\sum_{j=1}^n \beta_{i,j} x_j\right)^d}.
\end{equation*}
Moreover,  $m \leq \binom{n+d-1}{d-1}$.
\end{theorem}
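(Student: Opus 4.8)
The plan is to establish the stronger quantitative statement that every homogeneous $P$ of degree $d < p$ in $n$ variables can be written with at most $m \le \binom{n+d-1}{d-1}$ terms of the stated form (i.e.\ that the maximal ``Waring rank'' over $\mathbb{F}_p$ is bounded by $\binom{n+d-1}{d-1}$), and to do this by induction on $n+d$. The boundary cases are immediate: if $d=1$ then $P$ is itself a linear form, so $m = 1 \le \binom{n}{0}$; if $n=1$ then $P = \alpha x_1^d$, so $m = 1 \le \binom{d}{d-1}$.

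For the inductive step, assume $d,n \ge 2$ and write $P = \sum_{j=0}^{d} x_n^{\,j}\, Q_j(x_1,\dots,x_{n-1})$ with each $Q_j$ homogeneous of degree $d-j$. The idea is to cancel the entire $x_n$-dependent part of $P$ using only linear forms of the special shape $\ell_i = x_n + \mu_i$, where $\mu_i \in \operatorname{span}(x_1,\dots,x_{n-1})$. Expanding $(x_n+\mu_i)^d$ by the binomial theorem — the one place the hypothesis $d<p$ is used, to invert binomial coefficients — the coefficient of $x_n^{\,j}$ in $\sum_i \alpha_i (x_n+\mu_i)^d$ is $\binom{d}{j}\sum_i \alpha_i \mu_i^{\,d-j}$, so this combination agrees with $P$ in every power $x_n^{\,j}$ with $j \ge 1$ exactly when the pairs $(\alpha_i,\mu_i)$ satisfy the ``moment conditions'' $\sum_i \alpha_i \mu_i^{\,m} = \binom{d}{m}^{-1} Q_{d-m}$ for $m = 0,1,\dots,d-1$. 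Once these hold, the residual $P - \sum_i \alpha_i (x_n+\mu_i)^d$ equals $Q_0 - \sum_i \alpha_i \mu_i^{\,d}$, a homogeneous degree-$d$ polynomial in the $n-1$ variables $x_1,\dots,x_{n-1}$, which by the inductive hypothesis is representable with at most $\binom{(n-1)+d-1}{d-1} = \binom{n+d-2}{d-1}$ terms.

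It remains to satisfy the moment conditions with at most $\binom{n+d-2}{d-2}$ pairs, since then Pascal's rule $\binom{n+d-2}{d-1} + \binom{n+d-2}{d-2} = \binom{n+d-1}{d-1}$ closes the induction. The key observation is that the moment conditions are a disguised Waring problem of \emph{lower degree}: the auxiliary form $G := \sum_{m=0}^{d-1}\binom{d-1}{m}\,x_n^{\,d-1-m}\,\binom{d}{m}^{-1}Q_{d-m}$ is homogeneous of degree $d-1$ in $n$ variables, and $\sum_i \alpha_i(x_n+\mu_i)^{d-1} = G$ holds precisely when the $(\alpha_i,\mu_i)$ meet the moment conditions (compare $x_n^{\,d-1-m}$-coefficients and cancel the invertible $\binom{d-1}{m}$). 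So one applies the inductive hypothesis in degree $d-1$ to represent $G$ with at most $\binom{n+(d-1)-1}{(d-1)-1} = \binom{n+d-2}{d-2}$ terms and reads off the desired $(\alpha_i,\mu_i)$; one checks directly that all coefficients produced lie in $\mathbb{F}_p$, since every step only solves consistent $\mathbb{F}_p$-linear (Vandermonde-type) systems.

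The subtle point — which I expect to be the main obstacle — is that the representation of $G$ supplied by induction uses arbitrary linear forms, whereas the argument above needs forms of the shape $x_n + \mu_i$. A form with nonzero $x_n$-coefficient is harmless, since $\alpha_i(c\,x_n + \nu_i)^{d-1} = \alpha_i c^{d-1}(x_n + \nu_i/c)^{d-1}$ rescales it over $\mathbb{F}_p$ to the right shape (and leaves $G$, hence the moment data, unchanged); but a form with $x_n$-coefficient $0$ corrupts the lower-order moment conditions and cannot be absorbed. One therefore has to argue that such degenerate forms can be avoided — for instance by first applying a suitable invertible linear change of coordinates so that the peeling direction $x_n$ is in general position relative to $G$ — which is exactly the kind of case analysis carried out in \cite{schinzel2002decomposition,bialynicki2008representations}.
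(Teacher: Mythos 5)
The paper does not prove this theorem at all: it imports it verbatim from \cite{schinzel2002decomposition,bialynicki2008representations}, so there is no in-paper proof to compare against, and your attempt has to stand or fall on its own.

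The skeleton of your argument is reasonable. Peeling off the variable $x_n$, writing $P=\sum_j x_n^{\,j}Q_j$, and observing that forcing the coefficients of $x_n^j$ for $j\geq 1$ to vanish is a collection of ``moment conditions'' $\sum_i\alpha_i\mu_i^{\,m}=\binom{d}{m}^{-1}Q_{d-m}$ is correct, and you are right that $d<p$ is exactly what makes the binomial coefficients invertible. Recasting these moments as a single degree-$(d-1)$ Waring problem for the auxiliary form $G$ and closing the count with Pascal's rule $\binom{n+d-2}{d-1}+\binom{n+d-2}{d-2}=\binom{n+d-1}{d-1}$ is also the right bookkeeping.

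However, the gap you flag at the end is genuine and is not repaired by the sentence you offer. The inductive hypothesis hands you \emph{some} decomposition $G=\sum_i\alpha_i'\ell_i^{\,d-1}$ and gives you no control over whether the $\ell_i$ have nonzero $x_n$-coefficient; the proposed change of coordinates is circular, because $G$ is itself built from the choice of peeling direction, and because even after any linear change you still do not get to choose which decomposition the induction produces. A concrete failure already occurs at $d=2$: there $G=Q_2\,x_n+\tfrac12 Q_1$ is a single linear form, so the only decomposition supplied by the induction is $G$ itself; if $Q_2=0$ and $Q_1\neq 0$ the form $G$ has zero $x_n$-coefficient and your scheme produces no usable pairs. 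More generally, if you drop the bad forms the resulting residual is not a polynomial in $x_1,\dots,x_{n-1}$ alone — it acquires an unwanted term of the shape $x_n\cdot(\text{degree } d-1 \text{ in } x_1,\dots,x_{n-1})$ coming from the corrupted $m=d-1$ moment — so the $(n-1)$-variable branch of the induction cannot absorb it. Dealing with precisely these degenerate directions is the nontrivial content of the cited results; as written, your proof is a plausible sketch with a real hole at the step you yourself single out.
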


This new representation does not involve any multiplication. We can thus describe a register program that computes a polynomial in $O(1)$ recursive calls:
\begin{lemma}
    \label{homogeneous_degree_d_linear_form}
       Let $P$ be a homogeneous degree $d<p$ polynomial $P(x_1,\dotsc,x_n)$ over $\mathbb{F}_p$. There is a register program $F_P$ that cleanly computes $P$ with
    \begin{itemize}
        \item 
         $4$ recursive calls to each $x_i$,
         \item 
         $O(d\binom{n+d}{d})$ basic instructions, and
         \item
         $O(d\binom{n+d}{d})$ registers over $\mathbb{F}_p$.
    \end{itemize}
\end{lemma}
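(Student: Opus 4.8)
The plan is to combine \Cref{representation_linear_form} with the univariate machinery of \Cref{univariate_polynomial_set} and the ease of computing linear forms. First, apply \Cref{representation_linear_form} to write $P(x_1,\dots,x_n) = \sum_{i=1}^m \alpha_i \ell_i^d$, where $\ell_i = \sum_{j=1}^n \beta_{i,j} x_j$ is a linear form and $m \le \binom{n+d-1}{d-1} = O\!\left(\binom{n+d}{d}\right)$. The idea is to process the linear forms one at a time, reusing a small block of auxiliary registers for each. For a fixed $i$, we cleanly compute the scalar $\ell_i \in \mathbb{F}_p$ into a dedicated ``input'' register $R_{in}$: this takes $n$ basic instructions of the form $R_{in} \leftarrow R_{in} + \beta_{i,j} x_j$, one recursive call to each $x_j$ scaled by $\beta_{i,j}$ (which is exactly the ``input access'' form permitted by the definition).

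Once $\ell_i$ sits in $R_{in}$, I would invoke the register program from \Cref{univariate_polynomial} (or a single-polynomial instance of \Cref{univariate_polynomial_set}) for the univariate polynomial $Q_i(X) = \alpha_i X^d$ of degree $d$, with ``input'' $\ell_i$; its basic instructions act on a fresh block of $d + 1$ auxiliary registers plus the shared output register $R_{out}$, and all of its recursive calls are to $\ell_i$, which we have just made available in $R_{in}$. Because both the ``compute $\ell_i$ into $R_{in}$'' step and the \Cref{univariate_polynomial} program are clean, running the latter sandwiched inside the former, then undoing the former, cleanly adds $\alpha_i \ell_i^d$ to $R_{out}$ and restores everything else. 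Summing over $i = 1, \dots, m$ accumulates $\sum_i \alpha_i \ell_i^d = P(x)$ into $R_{out}$, and cleanliness is preserved because each iteration is clean and they share only $R_{out}$ (which is the register whose value legitimately changes) and scratch registers that are reset each time.

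For the resource count: recursive calls to each $x_i$ — the univariate subprogram makes $4$ calls to $\ell_i$, and each such call is realized by $n$ input accesses when we build $\ell_i$ in $R_{in}$; but we only ever touch $x_j$ when building $\ell_i$, and there the constant is absorbed — more carefully, for each $i$ we build $\ell_i$ into $R_{in}$ a constant number of times (essentially twice per ``clean in / clean out'' wrapper, matching the structure of \Cref{fig:program} where $R_{in}$ is set and unset), so the total number of accesses to a single $x_j$ is $O(1)$ per linear form; to get $4$ accesses to each $x_i$ overall rather than $O(m)$, one must be a bit careful and instead compute all $m$ linear forms simultaneously into $m$ separate input registers using a single pass of $4$ accesses to each $x_j$ (adding $\pm\beta_{i,j}x_j$ to the right registers), then apply the set-version \Cref{univariate_polynomial_set} with the $m$ polynomials $\alpha_i X^d$ sharing the ``input'' slot — wait, those have different inputs, so instead: make $4$ passes over the inputs, in pass number $k \in \{1,2\}$ adding $\beta_{i,j}x_j$ (resp.\ subtracting) to register $R_{in,i}$ for all $i,j$, which is $4n$ accesses total and $4$ per $x_i$, interleaved appropriately with the $m$ univariate subprograms so that the whole thing is clean. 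Basic instructions: $O(mn)$ for building the linear forms plus $m \cdot O(d)$ for the univariate subprograms, which is $O(d\binom{n+d}{d})$ after absorbing $n$ into $\binom{n+d}{d}$ (since $m = O(\binom{n+d}{d})$ and $n = O(\binom{n+d}{d})$ for $d \ge 1$). Registers: $m$ input registers $R_{in,i}$, $d+1$ shared scratch registers for the univariate subprogram, and one $R_{out}$, totaling $O(m + d) = O(d\binom{n+d}{d})$.

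The main obstacle I anticipate is the bookkeeping needed to get exactly $4$ recursive calls to each $x_i$ while keeping the whole composed program clean: naively one pays $O(m)$ accesses per input because each linear form independently touches every $x_j$. The fix is to hold all $m$ partial sums $\ell_i$ in separate registers at once and build them in a constant number of global passes over the inputs, which is the same trick that upgrades \Cref{univariate_polynomial} to \Cref{univariate_polynomial_set}; the delicate point is ensuring the $\pm x_j$ wrappers are nested correctly around the $m$ univariate evaluations so that every register other than $R_{out}$ returns to its initial value $\tau$. I expect the cleanliness verification to be routine given the cleanliness of the two ingredient programs, but it needs to be spelled out. Everything else — the degree bound $d < p$ inherited from \Cref{representation_linear_form}, the bound $m \le \binom{n+d-1}{d-1}$, and the arithmetic absorbing $n$ and $m$ into $\binom{n+d}{d}$ — is immediate.
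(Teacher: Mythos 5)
Your approach is the same as the paper's: decompose $P$ via \Cref{representation_linear_form} as $P = \sum_i \alpha_i \ell_i^d$, use \Cref{univariate_polynomial} (via \Cref{composition_lemma}, in the paper's phrasing) to power each linear form, and run the $m$ resulting programs in parallel sharing only $R_{out}$. The paper reaches the ``one pass per input access'' conclusion more directly by invoking \Cref{composition_lemma} on $g = x^d$ composed with $f_i = \ell_i$ and then noting the $m$ clean programs can be interleaved; your somewhat circuitous route to the same parallelization is fine.

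There is, however, one genuine slip in your register count. You claim the $m$ interleaved univariate subprograms can share a single block of $d+1$ scratch registers, giving $O(m+d)$ registers total. They cannot: once you commit to interleaving (which you correctly identify as necessary to keep the number of input accesses per $x_j$ constant), each subprogram $i$ must hold its own powers $(\tau_{in,i} + \ell_i)^k$ for $k=1,\dots,d$ \emph{simultaneously} across the passes — the scratch contents for subprogram $i$ must persist between pass $1$ and pass $4$ while all the other subprograms' scratch contents are also alive. Shared scratch registers only work in the sequential version (where you instead pay $O(m)$ input accesses per $x_j$); this is also why \Cref{univariate_polynomial_set} gets to share scratch space — there the $\ell$ output polynomials have the \emph{same} input, whereas here the inputs $\ell_i$ differ, a distinction you noticed mid-argument but didn't carry through to the register tally. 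The correct count, as in the paper, is $m(d+1)+1 = O(md) = O\bigl(d\binom{n+d}{d}\bigr)$ registers. Conveniently this still matches the bound in the lemma statement, so the result is unaffected — only your justification of the register bound needs fixing.
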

\begin{proof}
    We use \Cref{representation_linear_form} and write $P$ as: 
\begin{equation*}
    P(x_1,\dots,x_n)= \sum_{i=1}^m \alpha_i\left(\sum_{j=1}^n \beta_{i,j} x_j\right)^d.
\end{equation*}
Let $f_i(x_1,\dots, x_n)= \sum_{j=1}^n \beta_{i,j} x_j $ and $g=x^d$. Then by the above discussion:
\begin{itemize}
    \item 
    $f_i$ can be cleanly computed with $1$ recursive call to each $x_i$, $0$ basic instructions, and $1$ register.
    \item 
    $g$ can be cleanly computed with $4$ recursive calls, $2d+2$ basic instructions, and $d+2$ registers using the register program from \Cref{univariate_polynomial}.
\end{itemize}
Hence by \Cref{composition_lemma}, there exists a register program $R_{i}$ that cleanly computes 

$g(f_i(x_1,\dots, x_n))= {(\sum_{j=1}^n \beta_{i,j} x_j)}^d$ with $4$ recursive calls, $2d+2$ basic instructions, and $d+2$ registers. 

We can compute all these programs $R_i$ in parallel, only sharing the output register $R_{out}$. This leaves us with a register program that cleanly computes $P$ in $R_{out}$ with: 
\begin{itemize}
    \item 
    $4$ recursive calls to each $x_i$,
    \item 
    $m(2d+2) \leq (2d+2)\binom{n+d-1}{d-1}+1$ basic instructions,
    and
    \item 
    $m(d+1)+1\leq (d+1)\binom{n+d-1}{d-1}+1$ registers,
\end{itemize}
as required in the statement of the lemma.
\end{proof}

This register program immediately generalizes to a non-homogeneous polynomial $P$ by considering the decomposition $P=\sum_{i=0}^d P_i$, where each $P_i$ is a homogeneous degree $i$ polynomial.

\begin{corollary}
    \label{general_small_degree_d_linear_form}
       Let $P$ be a degree $d<p$ polynomial $P(x_1,\dotsc,x_n)$ over $\mathbb{F}_p$. There is a register program $F_P$ that cleanly computes $P$ with
    \begin{itemize}
        \item 
         $4$ recursive calls to each $x_i$,
         \item
         $O(d^2\binom{n+d}{d})$ basic instructions, and
         \item
         $O(d^2\binom{n+d}{d})$ registers over $\mathbb{F}_p$.
    \end{itemize}
\end{corollary}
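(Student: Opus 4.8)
The plan is to reduce the non-homogeneous case to the homogeneous one already handled in \Cref{homogeneous_degree_d_linear_form}. Write $P = \sum_{i=0}^{d} P_i$, where each $P_i$ is the homogeneous degree-$i$ part of $P$ (with $P_0$ the constant term, which is trivially computed by a single basic instruction adding $P_0$ to the output register). By \Cref{homogeneous_degree_d_linear_form}, each $P_i$ with $i < p$ is cleanly computable by a register program $F_{P_i}$ using $4$ recursive calls to each $x_j$, $O(i\binom{n+i}{i})$ basic instructions, and $O(i\binom{n+i}{i})$ registers over $\mathbb{F}_p$. Since $d < p$, this applies to every $P_i$, $i=1,\dots,d$.

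Next I would run all the $F_{P_i}$ in parallel, sharing a single common output register $R_{out}$ (exactly as in the proof of \Cref{homogeneous_degree_d_linear_form}, where the $R_i$ are combined). Cleanliness is preserved because each subprogram only modifies its private scratch registers and adds its contribution $P_i(x)$ to $R_{out}$; at the end $R_{out} = \tau_{out} + \sum_{i=0}^d P_i(x) = \tau_{out} + P(x)$ and all other registers are restored. Crucially, the recursive calls still number only $4$ per input $x_j$: within each $F_{P_i}$ one can arrange (as in \Cref{univariate_polynomial}) that the four calls to $x_j$ are the operations $R_{in} \leftarrow R_{in} \pm x_j$ done at four designated moments, and by interleaving the parallel subprograms so these moments coincide across all $i$, the total number of distinct recursive calls to each $x_j$ remains $4$. (Even without this optimization, treating the $x_j$ as received from a preceding computation, one could compose through the linear forms once; the $4$-call bound is exactly what makes the parallel sharing work.)

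Finally I would add up the parameters. The number of basic instructions is $1 + \sum_{i=1}^d O\big(i\binom{n+i}{i}\big) = O\big(d^2\binom{n+d}{d}\big)$, using that $i\binom{n+i}{i} \le d\binom{n+d}{d}$ for each $i \le d$ and there are $d$ terms; the same bound holds for the register count, since the parallel programs use disjoint scratch space and $\sum_{i=1}^d O\big(i\binom{n+i}{i}\big) = O\big(d^2\binom{n+d}{d}\big)$. This gives precisely the claimed $O\big(d^2\binom{n+d}{d}\big)$ bounds for both basic instructions and registers.

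The only delicate point is the bookkeeping that keeps the recursive-call count at $4$ rather than $4d$: one must check that the parallel composition can be synchronized so that all $d$ homogeneous subprograms draw on the same four $R_{in} \leftarrow R_{in} \pm x_j$ events for each input $x_j$, rather than each incurring its own four calls. This is a straightforward but slightly fussy scheduling argument, and it is the step I would write out most carefully; everything else is a direct summation over the homogeneous pieces.
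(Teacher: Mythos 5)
Your proof is correct and follows exactly the route the paper indicates (the paper gives only a one-line justification, namely the decomposition $P=\sum_{i=0}^d P_i$ into homogeneous parts followed by parallel application of \Cref{homogeneous_degree_d_linear_form}). The one point you flag as delicate, keeping the recursive-call count at $4$ rather than $4(d+1)$ when the homogeneous subprograms run in parallel with a shared output register, is precisely the same synchronization already used inside the proof of \Cref{homogeneous_degree_d_linear_form} itself (where $m$ subprograms share four $R_{in}\leftarrow R_{in}\pm x_j$ events), so it does not require a new argument here; your parameter accounting $\sum_{i\le d} O\bigl(i\binom{n+i}{i}\bigr)=O\bigl(d^2\binom{n+d}{d}\bigr)$ matches the stated bounds.
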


\noindent
This register program has the advantage of employing a constant number of recursive calls.
However, this method has two caveats. First, the cost in the number of registers is superpolynomial when $d$ is non-constant. Second, the degree $d$ is upper bounded by the field size.

Concerning the field issue, we will instead consider the field $\mathbb{F}_q$ for some $q>d$, such that $(P\bmod q)\bmod p= P\bmod p$. 
Observe that for any degree $d$ polynomial over $\mathbb{Z}$ where the coefficients are smaller than $p$, the polynomial evaluation for $x=(x_1,\dots,x_n)$, where $0\leq x_i\leq p$, is upper bounded by
\begin{equation*}
    P(x)\leq p \sum_{i=0}^d \binom{n}{i} p^i \leq 2^{n}p^{d+1}
\end{equation*}
Hence, we can first evaluate $P(x)$ over a field of size $q\geq2^{n}p^{d+1}$. This yields the most general register program, leaving yet the problem with the number of registers unresolved.
\begin{corollary}
    \label{general_degree_d_linear_form}
       Let $P$ be a degree $d$ polynomial $P(x_1,\dotsc,x_n)$ over $\mathbb{F}_p$. There is a register program that cleanly computes $P$ with
    \begin{itemize}
        \item 
         $4$ recursive calls to each $x_i$,
         \item
         $O(d^2\binom{n+d}{d})$ basic instructions,
         \item
         $O(d^2\binom{n+d}{d})$ registers over $\mathbb{F}_{q}$, where $q=O(2^{n}p^{d+1})$, and
         \item 
         $1$ register over $\mathbb{F}_p$.
    \end{itemize}
\end{corollary}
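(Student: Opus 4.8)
The plan is to deduce the general statement from \Cref{general_small_degree_d_linear_form} by working over a larger prime field. The hypothesis $d<p$ in \Cref{general_small_degree_d_linear_form} is forced only by the Waring-type decomposition of \Cref{representation_linear_form}, which needs the characteristic to exceed the degree; so I would replace $\mathbb{F}_p$ by a prime field $\mathbb{F}_q$ chosen large enough that both $q>d$ and $q$ exceeds the largest value $P$ can take when its coefficients and inputs are read as non-negative integers. Concretely: lift every coefficient of $P$ and every input $x_i$ to its representative in $\{0,\dots,p-1\}$, obtaining a polynomial $\widetilde P$ with non-negative integer coefficients, the same monomial support, and integer value bounded by $2^n p^{d+1}$ (the bound established just before the corollary). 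By Bertrand's postulate there is a prime $q$ with $\max(d+1,2^n p^{d+1})\le q=O(2^n p^{d+1})$.

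Next I would apply \Cref{general_small_degree_d_linear_form} to $\widetilde P$ regarded as a polynomial over $\mathbb{F}_q$; this is legitimate because $d<q$, and it produces a clean register program over $\mathbb{F}_q$ using $4$ recursive calls to each $x_i$, $O(d^2\binom{n+d}{d})$ basic instructions, and $O(d^2\binom{n+d}{d})$ registers over $\mathbb{F}_q$. The one thing to check at this point is that, because no partial sum occurring in the evaluation of $\widetilde P$ reaches $q$, the quantity this program adds to its output register is \emph{exactly} the integer $\widetilde P(x)\in[0,q)$, and the residue of that integer modulo $p$ is precisely the desired value $P(x)\in\mathbb{F}_p$.

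It then remains to carry this value over into a single register over $\mathbb{F}_p$ while keeping the whole program clean, and this is the only step that needs genuine care rather than bookkeeping. I would append one final reduction step that adds to a fresh $\mathbb{F}_p$-register the residue modulo $p$ of the integer sitting in the $\mathbb{F}_q$ output register, performed relative to that register's known initial contents exactly as in the output-extraction step of the $\TCo$ construction of \cite{buhrman2014computing}; this preserves cleanness and introduces exactly one register over $\mathbb{F}_p$. All remaining resource bounds are then inherited verbatim from \Cref{general_small_degree_d_linear_form} with the field enlarged to $\mathbb{F}_q$, $q=O(2^n p^{d+1})$, which is the claim. I expect the main obstacle to be precisely this $\mathbb{F}_q$-to-$\mathbb{F}_p$ interface — in particular justifying that the integer value of $\widetilde P$ never wraps around modulo $q$, so that the mod-$p$ reduction extracts the correct element — rather than any new structural idea.
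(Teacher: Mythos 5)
Your proposal takes essentially the same route as the paper, which in fact supplies even less detail: the paper's ``proof'' of this corollary is just the preceding paragraph, which lifts $P$ to $\mathbb{Z}$, bounds its integer value by $2^n p^{d+1}$, picks a prime $q$ above that bound, invokes \Cref{general_small_degree_d_linear_form} over $\mathbb{F}_q$, and declares the mod-$p$ reduction. Your addition of Bertrand's postulate (to justify $q=O(2^n p^{d+1})$) and your explicit flagging of the $\mathbb{F}_q$-to-$\mathbb{F}_p$ interface are both appropriate fillings-in of what the paper leaves implicit; the citation to the output-extraction step of \cite{buhrman2014computing} is the right place to look.

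One small imprecision worth noting: the phrase ``no partial sum occurring in the evaluation of $\widetilde P$ reaches $q$'' suggests the register program never wraps around mod $q$ internally, but this is not the case --- the program of \Cref{general_small_degree_d_linear_form} evaluates $\widetilde P$ through the Waring decomposition of \Cref{representation_linear_form}, whose coefficients $\alpha_i,\beta_{i,j}$ are arbitrary elements of $\mathbb{F}_q$, so intermediate register contents do wrap. What you actually need, and what your next clause correctly states, is only that the \emph{final} delta added to the $\mathbb{F}_q$ output register is $\widetilde P(x)\bmod q=\widetilde P(x)$ as an integer, which holds because $\widetilde P(x)<q$. Beyond that, you are right that the careful point is making the single $\mathbb{F}_p$-register extraction a clean step; the paper glosses over this in the same way, and its later \Cref{lemma_powering_all_n} handles the analogous matrix-powering case with the identical one-line remark, so you are not missing anything the paper chose to spell out.
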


We also note one register program of orthogonal strength, namely greater in recursive calls but much lesser in space. In order to prove that Tree Evaluation is in $\SPACE(\log{n} \log\log{n})$ (later improved by Stoeckl, see~\cite{goldreich2024solving}), Cook and Mertz~\cite{cook2023tree} present a register program to compute multivariate polynomials, which we will also use later:
\begin{lemma}
\label{interpolation_register_program}
    Let $P(x_1,\dots,x_{n})$ be a polynomial of degree $d<p$ over a prime field $\mathbb{F}_p$. There exists a register program that computes $P$ with:
    \begin{itemize}
        \item 
        $n$ input registers
        \item 
        $1$ output register
        \item 
        $O(d)$ basic instructions 
        \item 
        $d+1$ instructions of the type $I_{\lambda,all}$ or its inverse $I^{-1}_{\lambda,all}$, where:
        \begin{align*}
        I_{\lambda,all} \colon \text{For }  &1\leq i \leq \delta,
        \\
        & R_{in,\ell} \leftarrow R_{in,\ell}+\lambda x_i.
    \end{align*}
    \end{itemize}
\end{lemma}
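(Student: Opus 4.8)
\emph{Approach.} The plan is to follow the divided-difference/interpolation technique of Ben-Or--Cleve and Cook--Mertz~\cite{cook2023tree}. The obstruction to overcome is that the input registers $R_{in,1},\dots,R_{in,n}$ start at arbitrary ``garbage'' values $\tau_1,\dots,\tau_n$, and the recursive calls only let us \emph{add} scalar multiples $\lambda x_i$ to them; so the only points at which we can cheaply evaluate $P$ lie on the affine line $\lambda\mapsto(\tau_1+\lambda x_1,\dots,\tau_n+\lambda x_n)$, and in general $(x_1,\dots,x_n)$ is not on this line. The fix is to homogenize.

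First I would replace $P$ by its degree-$d$ homogenization $\hat P(x_0,x_1,\dots,x_n)=x_0^{\,d}P(x_1/x_0,\dots,x_n/x_0)=\sum_{|\alpha|\le d}c_\alpha\,x_0^{\,d-|\alpha|}x^{\alpha}$, so that $\hat P(1,x_1,\dots,x_n)=P(x_1,\dots,x_n)$. The key observation is that, for a scalar $\lambda$,
\begin{equation*}
  \tilde Q(\lambda)\;:=\;\hat P\bigl(\lambda,\;\tau_1+\lambda x_1,\;\dots,\;\tau_n+\lambda x_n\bigr)
\end{equation*}
is a univariate polynomial in $\lambda$ of degree at most $d$, and a term-by-term expansion shows that the coefficient of $\lambda^{d}$ in $\tilde Q$ equals $\hat P(1,x_1,\dots,x_n)=P(x_1,\dots,x_n)$, with \emph{no} dependence on the garbage $\tau$ (in each monomial of $\hat P$ one must pick the $\lambda$-part of every factor to reach degree $d$). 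Now fix $d+1$ distinct scalars $\lambda_0=0,\lambda_1,\dots,\lambda_d\in\mathbb F_p$, which is possible since $d<p$, and the divided-difference weights $\mu_i=\bigl(\prod_{j\ne i}(\lambda_i-\lambda_j)\bigr)^{-1}$, for which $\sum_{i=0}^{d}\mu_i\,\tilde Q(\lambda_i)$ equals the leading coefficient of any polynomial $\tilde Q$ of degree $\le d$. This yields the identity $P(x)=\sum_{i=0}^{d}\mu_i\,\hat P(\lambda_i,\tau_1+\lambda_ix_1,\dots,\tau_n+\lambda_ix_n)$ in $\tau$ and $x$, which the program will realize.

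\emph{The program.} Run $i$ through $0,1,\dots,d$, keeping the invariant that at step $i$ the input registers hold $R_{in,j}=\tau_j+\lambda_ix_j$. Step $0$ needs nothing, since the registers start at $\tau_j=\tau_j+\lambda_0x_j$ as $\lambda_0=0$; for $i\ge1$, one instruction $I_{\lambda_i-\lambda_{i-1},\mathrm{all}}$ moves the registers from $\tau_j+\lambda_{i-1}x_j$ to $\tau_j+\lambda_ix_j$. Step $i$ then performs the single basic instruction
\begin{equation*}
  R_{out}\;\leftarrow\;R_{out}+\mu_i\cdot\hat P(\lambda_i,R_{in,1},\dots,R_{in,n})
  \;=\;R_{out}+\mu_i\sum_{|\alpha|\le d}c_\alpha\,\lambda_i^{\,d-|\alpha|}\,R_{in}^{\alpha},
\end{equation*}
which is legal because the right-hand side is a polynomial in registers other than $R_{out}$ (the scalar $\lambda_i$ has already been substituted for the homogenizing coordinate, so no $(n{+}1)$-st register is needed), regardless of how many monomials $P$ has. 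After step $d$ the registers hold $\tau_j+\lambda_dx_j$, so one final instruction $I^{-1}_{\lambda_d,\mathrm{all}}$ restores each to $\tau_j$; by the identity above, $R_{out}$ has been incremented by exactly $P(x)$ and all other registers are unchanged. The tally is: $n$ input registers and $1$ output register; $d+1=O(d)$ basic instructions, one per step; and the $d$ shift instructions $I_{\lambda_i-\lambda_{i-1},\mathrm{all}}$ together with the one restoring $I^{-1}_{\lambda_d,\mathrm{all}}$, i.e.\ $d+1$ instructions of type $I_{\lambda,\mathrm{all}}$ or its inverse.

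\emph{Main obstacle.} The point requiring care is precisely the homogenization and the bookkeeping around it: without the extra coordinate $x_0$, the leading coefficient of $\lambda\mapsto P(\tau+\lambda x)$ is only the top-degree homogeneous part of $P$ evaluated at $x$, not $P(x)$ itself; substituting the step-scalar $\lambda_i$ for $x_0$ inside the basic instruction is what avoids paying for an $(n{+}1)$-st register; and taking $\lambda_0=0$ is what lets step $0$ cost no recursive-call instruction, so that the $d$ shifts plus the single clean-up instruction meet the $d+1$ budget exactly. The only other thing to verify, that the $\tau$-dependent garbage terms $\mu_i\hat P(\lambda_i,\tau+\lambda_i x)$ cancel in the sum, is immediate from the interpolation identity.
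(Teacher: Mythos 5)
Your proof is correct and reconstructs exactly the Cook--Mertz interpolation argument that the paper is citing (the paper itself gives no proof for this lemma, only a pointer to \cite{cook2023tree}). The two load-bearing ideas --- homogenizing $P$ so that the coefficient of $\lambda^d$ in $\hat P(\lambda,\tau+\lambda x)$ is $P(x)$ with no $\tau$-dependence, and then extracting that coefficient by evaluating at $d+1$ points $\lambda_0=0,\dots,\lambda_d$ with divided-difference weights (Cook--Mertz phrase this with roots of unity; your Lagrange form is the same identity and only needs $d<p$) --- are both present and correctly verified, and the step-count bookkeeping ($d$ shifts plus one restore, one basic instruction per step) matches the stated bounds.
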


\subsection{Computing Boolean functions}
In the preceding section we presented a register program to compute polynomials over any finite field. However, for the case of polynomials over $\mathbb{F}_2$, i.e.\ Boolean functions, there are specific properties that we can exploit to make them simpler to compute, the most important being that any polynomial over $\mathbb{F}_2$ is multilinear since $x_i^2=x_i$.

Since any Boolean function $f$ is uniquely represented by a polynomial $P$ over $\mathbb{Z}_2$, we will directly consider Boolean functions in this section. Let us present a register program that computes any Boolean function $f$ given a representation of $f$. 

Let us first consider the case of symmetric functions. Given a symmetric function $f$ in $n$ variables, there is some univariate polynomial $g\colon [n]\mapsto \mathbb{N}$ such that $f(x_1,\dots, x_n)=g(x_1+\cdots+x_n)$. We can compute $g$ using \Cref{univariate_polynomial}, which yields the following register program:
\begin{lemma}
\label{bool_sym_const_rec_calls}
Let $n\in \mathbb{N}$, and let $f$ be a symmetric polynomial function.  Let $p\in\mathbb{N}$ be a prime number such that $p>n$. There is a register program that computes $f$ with:
 \begin{itemize}
            \item 
            $4$ recursive calls to each $x_i$,
            \item
            $O(n)$ basic instructions, and
            \item 
            $O(p)$ registers over $\mathbb{Z}_{p}$.
\end{itemize}
\end{lemma}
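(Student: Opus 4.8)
The plan is to reduce computing a symmetric Boolean function $f$ on $n$ variables to a single univariate polynomial evaluation, and then invoke \Cref{univariate_polynomial}. Since $f$ is symmetric, its value depends only on the Hamming weight $w = x_1 + \cdots + x_n \in \{0,1,\dots,n\}$ of the input. Because $p > n$, this weight is faithfully represented by the element $\sum_i x_i \in \mathbb{F}_p$ (no wraparound occurs), so there is a univariate polynomial $g \in \mathbb{F}_p[X]$ of degree at most $n$ with $f(x_1,\dots,x_n) = g(x_1 + \cdots + x_n)$ for all $x \in \{0,1\}^n$; one obtains $g$ by Lagrange interpolation through the $n+1$ points $(k, f\text{-value on weight }k)$ for $k = 0,\dots,n$.

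The next step is to realize $g(x_1 + \cdots + x_n)$ as a composition of two clean register programs. The inner function is the linear form $h(x_1,\dots,x_n) = \sum_{i=1}^n x_i$, which is cleanly computable with $1$ recursive call to each $x_i$, $0$ basic instructions, and $1$ register (as already observed in the discussion preceding \Cref{homogeneous_degree_d_linear_form}). The outer function is $g(y)$, a degree-$\le n$ univariate polynomial over $\mathbb{F}_p$, which by \Cref{univariate_polynomial} is cleanly computed with $4$ recursive calls to $y$, $2n+2$ basic instructions, and $n+2$ registers. Applying the Composition Lemma (\Cref{composition_lemma}) with $t_h = 1$ per variable gives a program computing $g \circ h = f$ with $4 \cdot 1 = 4$ recursive calls to each $x_i$, $(2n+2) + 4 \cdot 0 = O(n)$ basic instructions, and $\max\{n+2, 1\} = n + 2 = O(p)$ registers over $\mathbb{F}_p = \mathbb{Z}_p$, matching the claimed bounds. (Here I should double-check that the Composition Lemma, stated for unary $f,g$, applies to the multivariate inner function $h$; the intended reading is that $h$ feeds a single value into the input register of the $g$-program, and the recursive-call count to each $x_i$ is $t_g \cdot (\text{calls to }x_i \text{ in the } h\text{-program})$, which is the straightforward generalization used throughout the paper.)

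I do not anticipate a serious obstacle here — the lemma is essentially a direct corollary of \Cref{univariate_polynomial} plus the trivial sum program. The one point requiring a word of care is the hypothesis $p > n$: it is exactly what guarantees that distinct Hamming weights in $\{0,\dots,n\}$ remain distinct in $\mathbb{F}_p$, so that the interpolating polynomial $g$ is well-defined and genuinely computes $f$ on Boolean inputs. (Strictly, one also wants $f$ to be $\{0,1\}$-valued as an integer-valued symmetric function, which the hypothesis "symmetric polynomial function" provides.) A secondary bookkeeping point is cleanliness: both constituent programs are clean, the inner one writing only to the shared input register and the outer one to that register plus its scratch registers and $R_{out}$, so the composed program is clean with the accumulated result sitting in $R_{out}$, as needed for later composition and for the translation to a catalytic machine via \Cref{from_register_to_turing_machine}.
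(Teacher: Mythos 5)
Your proof is correct and takes exactly the same approach as the paper: write $f(x) = g\bigl(\sum_i x_i\bigr)$ with $g$ a univariate polynomial of degree $\le n$ over $\mathbb{F}_p$ (well-defined since $p > n$), then compose the trivial summation program with the univariate-polynomial program of \Cref{univariate_polynomial} via \Cref{composition_lemma}. The paper treats this as immediate and gives no detailed proof; your write-up fills in the cleanliness and bookkeeping checks, which are accurate.
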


\noindent
From this we deduce the following lemma for polynomials in general:
\begin{lemma}
\label{representation_integer_boolean_function}
    Let $f\colon \{0,1\}^n \mapsto \{0,1\}$ be a boolean function which is $(\mathbb{Z},d,t)$-represented, and let $p$ be a prime number such that $p>\max\{d,t\}$. There is a register program which cleanly computes $f$ with:
    \begin{itemize}
        \item 
        $64$ recursive calls to each $x_i$,
        \item
        $O(tp^2 \log p)$ basic instructions, and
        \item 
        $O(t p)$ registers over $\mathbb{Z}_{p}$.
    \end{itemize}
\end{lemma}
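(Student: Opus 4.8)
I want to compute a Boolean function $f$ that is $(\mathbb{Z},d,t)$-represented by some integer polynomial $P = \sum_{k=1}^t c_k m_k$, where each $m_k$ is a multilinear monomial of degree at most $d$ and $|c_k| < p$. Since $P$ represents $f$, we have $f(x) = 0$ iff $P(x) = 0$ on Boolean inputs; the issue is that "$P(x) \neq 0$" is not directly the value of a low-complexity polynomial. The standard fix is to pass to $\mathbb{F}_p$ and use Fermat: on inputs $x \in \{0,1\}^n$ with $|P(x)| < $ (something manageable), we have $f(x) \equiv P(x)^{p-1} \pmod p$, because over $\mathbb{F}_p$ a nonzero element raised to the $p-1$ power is $1$ and $0$ stays $0$. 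So the target polynomial to compute is $Q = P^{p-1}$ over $\mathbb{F}_p$ — but naively $Q$ has degree $d(p-1)$ and potentially $t^{p-1}$ monomials, which is far too large. The key trick (which is why the hypothesis $p > \max\{d,t\}$ and the bound $O(tp^2\log p)$ appears) is \emph{not} to expand $Q$, but to compute it via repeated squaring as a \emph{composition} of register programs, using the Composition Lemma (\Cref{composition_lemma}) together with the univariate-power program of \Cref{univariate_polynomial}.

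\textbf{Key steps, in order.} First, from the $(\mathbb{Z},d,t)$-representation build the linear form $y := \sum_{k=1}^t c_k m_k(x)$. Each monomial $m_k$ is a product of at most $d$ literals; since $x_i^2 = x_i$ over the Booleans this is genuinely a monomial, and by \Cref{bool_sym_const_rec_calls} (or directly, since a single monomial is an AND of $\le d$ variables, a symmetric-type function of those variables) we can cleanly compute each $m_k$ with $O(1)$ recursive calls to its variables and $O(p)$ registers; summing the $t$ of them into one register computes $y$ with $O(1)$ recursive calls to each $x_i$ and $O(tp)$ registers, $O(t\cdot p)$ basic instructions. Second, I want $y^{p-1} \bmod p$ as a function of $y$. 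Write $p - 1$ in binary; $y^{p-1}$ is obtained by $O(\log p)$ squarings and multiplications. Using \Cref{univariate_polynomial}, the map $z \mapsto z^2$ (a degree-$2$ univariate polynomial) is cleanly computable with $4$ recursive calls to $z$, $O(1)$ basic instructions, $O(1)$ registers; likewise $z \mapsto z^{2}\cdot w$ can be folded in. Composing $O(\log p)$ such stages via \Cref{composition_lemma} gives a clean program for $z \mapsto z^{p-1}$ with $4^{O(\log p)}$ recursive calls — which is $p^{O(1)}$, too many. So instead I build the univariate polynomial $g(z) = z^{p-1} \bmod p$ directly: it has degree $p-1 < p$, so \Cref{univariate_polynomial} computes $g(z)$ with just $4$ recursive calls to $z$, $O(p)$ basic instructions, and $O(p)$ registers. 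Third, compose: by \Cref{composition_lemma}, $g \circ (\text{the linear form } y)$ computes $f$ with $4 \cdot (\text{recursive calls used to compute } y)$ recursive calls to each $x_i$, hence $O(1)$ (the constant $64$ in the statement comes from a two-level composition: $g \circ (\text{power program for monomials}) \circ (\text{literals})$, each level contributing a factor of $4$, and $64 = 4^3$), $O(p) + 4 \cdot O(tp) = O(tp)$... wait, the basic-instruction count is $s_f + t_f s_g = O(p) + 4\cdot O(tp \log p)$; the $\log p$ and the extra $p$ factor enter because computing each monomial $m_k$ over $\mathbb{Z}_p$ as an AND requires a degree-$\Theta(d)$ or $\Theta(p)$ univariate symmetric trick contributing $O(p\log p)$ or $O(p^2)$ instructions, so the total is $O(tp^2\log p)$ basic instructions and $O(tp)$ registers as claimed.

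\textbf{Main obstacle.} The delicate point is getting the recursive-call count down to a constant while keeping the instruction/register counts polynomial in $t$ and $p$ — these two goals pull against each other under \Cref{composition_lemma}, since each composition level multiplies recursive calls but only adds (scaled) instructions. The resolution is to never compute $P^{p-1}$ by iterated multiplication (which would blow up recursive calls geometrically) but to realize $z \mapsto z^{p-1} \bmod p$ as a \emph{single} univariate polynomial of degree $< p$, for which \Cref{univariate_polynomial} gives the magic constant $4$ recursive calls regardless of degree. The second subtlety is ensuring the monomials $m_k$ are themselves computed cleanly with $O(1)$ recursive calls: a degree-$d$ monomial is the AND of $d$ literals, $\mathrm{AND}(z_1,\dots,z_d) = [\,\sum z_i = d\,]$, a symmetric function, so \Cref{bool_sym_const_rec_calls} applies and gives $4$ recursive calls, $O(p)$ registers; one must check the "literals vs.\ variables" bookkeeping so that negated literals $\lnot x_i = 1 - x_i$ don't inflate the recursive-call count, which they don't since that is an affine adjustment absorbed into the linear-form stage. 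Finally, one verifies cleanliness is preserved throughout, which is immediate from \Cref{composition_lemma} and the fact that each building block is clean.
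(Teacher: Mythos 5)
Your proposal takes essentially the same approach as the paper: decompose $P$ into its $t$ monomial terms, compute each term as a symmetric function via \Cref{bool_sym_const_rec_calls}, aggregate into the value $P(x)\bmod p$, and then apply a single univariate polynomial of degree $<p$ via \Cref{univariate_polynomial} to convert that value into the Boolean answer, composing three levels of $4$ recursive calls each to reach $64$. Your explicit invocation of Fermat's little theorem ($z\mapsto z^{p-1}$) for the final zero-test is in fact a cleaner rendering than the paper's ``apply the OR register program to the $O(\log p)$ binary bits of the output register,'' since a $\mathbb{Z}_p$-register program cannot directly read individual bits; in both cases this is realized as one degree-$O(p)$ univariate polynomial costing $4$ recursive calls via \Cref{univariate_polynomial}, so the two accounts coincide.
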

\begin{proof}
    Let $P$ be the polynomial which represents $f$, which we write as a sum of terms
    \begin{equation*}
        P=\sum_{j=1}^t u_k
    \end{equation*}
    where each term $u_k$ has the form
    \begin{equation*}
       u_k = \prod_{j=1}^d x_{i_j}, \quad i_1,\dots,i_d\in[n]
    \end{equation*}
    Observe that each $u_k$ is symmetric, and so using the register program given by \Cref{bool_sym_const_rec_calls}, we deduce that we can compute all the terms in parallel with 
    \begin{itemize}
        \item 
        $4$ recursive calls to each $x_i$,
        \item 
        $O(tp)$ basic instructions, and
        \item 
        $O(t p)$ registers over $\mathbb{Z}_{p}$.
    \end{itemize}
    To compute $\sum_{k=1}^t u_k$, we can again use the register program of \Cref{bool_sym_const_rec_calls}.
    This yields a register program with 
    \begin{itemize}
        \item 
        $4$ recursive calls to each $u_j$,
        \item 
        $O(p)$ basic instructions, and
        \item 
        $O(p)$ registers over $\mathbb{Z}_{p}$.
    \end{itemize}
    Composing the two register programs using \Cref{composition_lemma}, we have a register program $R_f$ which computes $P$ with
    \begin{itemize}
        \item 
        $16$ recursive calls to each $x_i$,
        \item
        $O(tp^2)$ basic instructions, and
        \item 
        $O(tp)$ registers over $\mathbb{Z}_{p}$.
    \end{itemize}
    Lastly, we convert the computation of $P$ into a computation of $f$. Note that our final output register for $P$ is over $\mathbb{Z}_p$, which we represent in binary with $O(\log p)$ bits. We now apply the register program for the OR function from \Cref{univariate_polynomial} which uses $4$ recursive calls, $O(\log p)$ basic instructions, and $O(\log p)$ registers. Composing this with the rest of the program gives us a final program for computing $f$ with
    \begin{itemize}
        \item 
        $64$ recursive calls to each $x_i$,
        \item
        $O(tp^2 \log p)$ basic instructions, and
        \item 
        $O(tp)$ registers over $\mathbb{Z}_{p}$.
    \end{itemize}
    which completes the lemma.
\end{proof}

\subsection{Circuits via Merging Layers}
In this section, we will find efficient register programs for circuits, and from it efficient catalytic algorithms, by a strategy of merging layers and directly computing the ``super-functions'' that emerge. Our starting point is the following lemma, used in \cite{cleve1988computing,buhrman2014computing}.

\begin{lemma}
\label{depth_d_gates}
    Let $B$ be a set of Boolean functions such that for any function $g \in B$ we have a register program $P_g$ with at most $t$ recursive calls, $b$ basic instructions, and $r$ registers computing $g$. Let $C$ be a depth $d$, size $s$ circuit whose gates are functions in $B$. Then $C$ can be computed by a register program $P_C$ with
    \begin{itemize}
        \item $O(t^d)$ recursive calls,
        \item $O(sb \cdot t^d)$ instructions, and
        \item $O(rs)$ registers. 
    \end{itemize}
\end{lemma}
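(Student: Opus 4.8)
The plan is to prove \Cref{depth_d_gates} by induction on the depth $d$ of the circuit, using the Composition Lemma (\Cref{composition_lemma}) layer by layer. First I would fix a topological layering of $C$ into $d$ levels $L_1,\dots,L_d$ (with $L_d$ containing the output gate), where each gate in level $L_{k}$ receives its inputs from gates in level $L_{k-1}$; padding with identity gates lets us assume this without loss of generality while only blowing up the size by a factor of $d$ (absorbed in the $O(\cdot)$). The idea is that computing the output of $C$ amounts to composing $d$ ``one-layer'' functions, each of which maps a tuple of (at most $s$) values to another such tuple by applying, coordinatewise, one of the register programs $P_g$ from $B$.

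The key step is to handle a single layer. For a layer computing functions $g_1,\dots,g_m$ ($m\le s$) in parallel, I would run the $m$ clean register programs $P_{g_1},\dots,P_{g_m}$ simultaneously, giving each its own private block of $r$ registers but routing the recursive calls to the shared set of ``previous-layer'' output registers. Because the programs are clean, running them in parallel is harmless, and this one-layer subroutine uses $t$ recursive calls (to each previous-layer value), $O(sb)$ basic instructions, and $O(rs)$ registers. Then I would compose these $d$ one-layer subroutines via \Cref{composition_lemma}: composing $d$ programs each with $t$ recursive calls multiplies the recursive calls to $t^d$; the basic-instruction count satisfies the recursion $B_k \le B_{\text{layer}} + t\cdot B_{k-1}$ which telescopes to $O(sb\cdot t^d)$ (the geometric sum in $t$ is dominated by its last term, up to a constant when $t\ge 2$, and the $t=1$ case is even easier); and the register count is the max over all layers, hence $O(rs)$.

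The main obstacle I expect is bookkeeping the register count correctly through the composition: \Cref{composition_lemma} as stated composes scalar functions $f,g\colon\mathbf R\to\mathbf R$, whereas here each layer is genuinely a multi-output function on a tuple of values, so I need to either restate/apply the Composition Lemma in the vectorized form (treating the $s$-tuple of layer outputs as living in $\mathbf R^s$, or equivalently bundling the shared registers) or argue that the $O(rs)$ bound for the one-layer program already accounts for all the registers any sublayer needs, so that the ``$\max\{r_f,r_g\}$'' clause never forces growth beyond $O(rs)$. A second, minor point is to confirm that the composed program is itself clean (so the final statement is usable downstream), which follows because each $P_g$ is clean and cleanliness is preserved by the parallel-composition and sequential-composition operations used here.
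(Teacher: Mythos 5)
Your proposal is correct and follows essentially the same approach as the paper: compute each layer in parallel (each gate $P_g$ with its own private block of registers but sharing the previous layer's output registers as inputs), then compose the $d$ one-layer subroutines via \Cref{composition_lemma}, taking the register count to be the max over layers and the basic-instruction count to telescope as a geometric series to $O(sb\cdot t^d)$. The paper's proof is terser and silently applies \Cref{composition_lemma} to the multi-output layer maps, glossing over the scalar-versus-vector point that you rightly flag as needing a vectorized restatement or a direct register-accounting argument.
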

\begin{proof}
    We will perform the operations of each layer of the circuit in parallel. Each layer uses $t$ recursive calls to the previous layer and $sb$ basic instructions; hence for a depth $d$ circuit, applying \Cref{composition_lemma} iteratively gives $O(t^d)$ recursive calls to the last layer and $sb \cdot O(t^d)$ total basic instructions. Since each layer has at most $s$ gates, we will need at most $rs$ registers at each layer, and so again by \Cref{composition_lemma} this gives $2rs$ registers in total. 
\end{proof}

Our strategy will be as follows: instead of computing a height $h$ circuit $C$ with AND and OR gates layer by layer, we will show that we can compute $d$ layers at a time by an efficient register program, and thus consider the circuit $C'$ of height $h/d$ whose gates are themselves depth $d$ circuits. We do this using polynomial representations of such circuits, which we then combine with \Cref{representation_integer_boolean_function} to obtain the register programs in question. Our main statement is the following:

\begin{lemma}
\label{sac_circuit_poly}
    Let $C$ be a polynomial size depth $d$ circuit with fan-in $2$ AND gates and fan-in $\ell$ OR gates for some $\ell$. Then $C$ can be represented by a degree $k\leq 2^d$  polynomial with $t\leq \ell^{2^d}$ terms over $\mathbb{Z}$. 
\end{lemma}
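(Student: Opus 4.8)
The plan is to induct on the depth $d$ of the circuit $C$, propagating a polynomial representation (in the sense of \Cref{polynomial_representation}) upward from the inputs. At depth $0$, each gate is an input literal $x_i$ or $\lnot x_i$, which is represented over $\mathbb{Z}$ by the degree-$1$, single-term polynomial $x_i$ or $1 - x_i$; so the base case holds with $k = 1 = 2^0$ and $t = 1 \leq \ell^{2^0}$ (here one should be slightly careful that $1-x_i$ has two terms, but since we only need the bound $\ell^{2^d}$ and $\ell \geq 1$, the small constant overhead at the leaves is absorbed — alternatively one can homogenize the count by noting $2 \leq \ell^{1}$ when $\ell \geq 2$, and the $\ell = 1$ case is degenerate). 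For the inductive step, I would take a gate $G$ at depth $d$ with children $G_1, G_2$ (if $G$ is an AND gate) or $G_1, \dots, G_\ell$ (if $G$ is an OR gate), each of which is the output of a subcircuit of depth at most $d-1$, hence by induction represented by a polynomial $P_i$ of degree at most $2^{d-1}$ with at most $\ell^{2^{d-1}}$ terms.

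The key step is to combine the children's representations through the gate operation while keeping the representation property intact. For an AND gate, $G = G_1 \wedge G_2$ is represented by $P_1 \cdot P_2$: indeed $P_1(x) P_2(x) = 0$ iff $P_1(x) = 0$ or $P_2(x) = 0$ (using that $\mathbb{Z}$ is an integral domain) iff $G_1(x) = 0$ or $G_2(x) = 0$ iff $(G_1 \wedge G_2)(x) = 0$. The degree becomes at most $2^{d-1} + 2^{d-1} = 2^d$, and the number of terms at most $\ell^{2^{d-1}} \cdot \ell^{2^{d-1}} = \ell^{2^d}$, as desired. For an OR gate, $G = G_1 \vee \dots \vee G_\ell$ cannot simply be represented by $\sum_i P_i$ (cancellation could occur), so instead I would use the product $\prod_{i=1}^\ell P_i'$ where $P_i'$ represents $\lnot G_i$; more cleanly, since each child polynomial represents a Boolean value, I would first replace each $P_i$ by a polynomial $\widetilde{P_i}$ that is $\{0,1\}$-valued on Boolean inputs and represents $G_i$ — but this risks a degree blowup. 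The cleaner route, and the one I expect to use, is: $G$ is false iff all $G_i$ are false, so $\lnot G = \bigwedge_i \lnot G_i$; recursively each $\lnot G_i$ is represented by some polynomial $Q_i$ of the same degree/term bound as $P_i$ (represents are closed under the complement of the underlying Boolean function at no cost, by swapping the role of ``zero'' appropriately — more precisely one carries along representations of \emph{both} a gate and its negation). Then $Q := \prod_{i=1}^\ell Q_i$ represents $\lnot G$, with degree $\leq \ell \cdot 2^{d-1} \leq 2^{d-1} \cdot 2^{d-1}$... which is too big for unbounded $\ell$.

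The main obstacle, then, is handling the \emph{unbounded} fan-in of the OR gates without blowing the degree past $2^d$. The resolution exploits that $\SAC$ circuits only have unbounded fan-in on the OR gates and fan-in $2$ on AND gates, so along any root-to-leaf path the number of AND gates is at most $d$, giving a product-degree bound of $2^d$ coming only from the AND gates; the OR gates must contribute degree $1$, not degree $\ell$. Concretely, I would strengthen the induction hypothesis to track a representation $P_G$ of $G$ such that $P_G$ is (equivalent to) a polynomial that is $0/1$-valued on all of $\{0,1\}^n$; then an OR of such $0/1$-valued polynomials $P_{G_1}, \dots, P_{G_\ell}$ is represented simply by $\sum_i P_{G_i}$ (no cancellation since all summands are nonnegative on Boolean inputs, and the sum is $0$ iff every summand is $0$), which keeps degree $\max_i \deg P_{G_i} \leq 2^{d-1} \leq 2^d$ and term count $\sum_i t_i \leq \ell \cdot \ell^{2^{d-1}} = \ell^{2^{d-1}+1} \leq \ell^{2^d}$. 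To maintain the $0/1$-valued invariant through an AND gate, note $P_{G_1} \cdot P_{G_2}$ of two $0/1$-valued polynomials is again $0/1$-valued, with the degree and term bounds computed as above. Thus I would carry the invariant ``$P_G$ represents $G$ and $P_G(\{0,1\}^n) \subseteq \{0,1\}$'' through both gate types by induction; the leaves $x_i$ and $1 - x_i$ satisfy it, AND gates multiply, OR gates add, and the depth-$d$ bound $k \leq 2^d$, $t \leq \ell^{2^d}$ follows. A final remark I would include: only the AND gates grow the degree (doubling it, starting from $1$), which is why $k \leq 2^d$, whereas a naive $\AC$-style argument with unbounded AND fan-in would not give this bound — this is precisely where the footnote's insistence on this orientation of $\SAC$ is used.
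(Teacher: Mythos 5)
Your final construction matches the paper's: induct on depth, represent leaves by $x_i$ or $1-x_i$, multiply the two child polynomials at an AND gate, and sum the $\ell$ child polynomials at an OR gate, giving degree $\leq 2^d$ and term count $\leq \ell^{2^d}$ by exactly the computation you write out. You are in fact slightly more careful than the paper's proof, which silently uses $P=\sum_i P_i$ for OR without addressing the cancellation concern you raise; the observation that the inductively built polynomials are nonnegative on Boolean inputs (so the sum vanishes iff every summand does) is the right way to close that gap, and is needed because the definition of ``represents'' works over $\mathbb{Z}$.

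One small inaccuracy in your strengthened invariant: you cannot actually maintain ``$P_G$ is $0/1$-valued on $\{0,1\}^n$'' through an OR gate, since $\sum_{i=1}^\ell P_{G_i}$ takes values in $\{0,\dots,\ell\}$ when the $P_{G_i}$ are $0/1$-valued, and a subsequent AND would then produce even larger values. The invariant that is actually maintained, and the one your cancellation argument really uses, is the weaker ``$P_G(x)\geq 0$ for all $x\in\{0,1\}^n$, with $P_G(x)=0$ iff $G(x)=0$.'' That weaker invariant does pass through both gate types (sums and products of nonnegatives are nonnegative), so the fix is cosmetic. The lengthy detour through products of negated children is correctly rejected but could be cut; the rest is sound.
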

\begin{proof}
    The proof is by induction. The claim for $d=0$ follows immediately since this is in this case the circuit computes one of the inputs. 

    We now assume that the claim holds for depth $d-1$, and let $C$ be a depth $d$ circuit. We apply the induction hypothesis to the children of the top gate $g$, and have two cases for $g$ itself:
    \begin{description}
        \item
        [If the top gate is an OR gate:] We can find a polynomial $P=\sum_{i=1}^\ell P_{i}$  representing the circuit $C$, where $P_i$ is the polynomial for each input of the OR gate. Using the induction hypothesis
        \begin{equation*}
            \deg P= \max_i{\deg P_{i}}\leq 2^{d-1}< 2^d.
        \end{equation*}
        Moreover, if we let $t_{i}$ be the number of terms of each $P_{i}$, the number of terms of $P$ is 
         \begin{equation*}
            t=\sum_i t_{i}\leq \ell \cdot \ell^{2^{d-1}} \leq \ell^{2^d}
        \end{equation*}
        \item[If the output gate is a binary AND gate:] Let $P_l$ and $P_r$  respectively be the polynomials representing the left and right children. The polynomial $P=P_lP_r$  represents the circuit $C$, and
       
        \begin{equation*}
            \deg P= \deg P_l+\deg P_{r}\leq 2^{d-1}+2^{d-1}=2^d.
        \end{equation*}
        On the other hand, 
        \begin{equation*}
            t=t_lt_r\leq \left(\ell^{2^{d-1}}\right)^2= \ell^{2^{d}}
        \end{equation*}
    \end{description}
    which completes the proof.
\end{proof}

Combining \Cref{sac_circuit_poly} for $\ell = n^{O(1)}$ with our register program for representations in \Cref{representation_integer_boolean_function}, we immediately obtain the following corollary.

\begin{corollary}
\label{representation_circuit_depth_d}
     Let $C$ be a size $s$ depth $d$ circuit on $n$ inputs with unbounded fan-in OR gates and fan-in $2$ AND gates, and let ${p}\in\mathbb{N}$ be a prime number such that $p > s^{2^d}$.
     There is a register program which cleanly computes $C$ with:
    \begin{itemize}
        \item 
        $64$ recursive calls to each $x_i$,
        \item 
        $s^{O(2^d)}$ basic instructions, and
        \item 
        $s^{2^d}$ registers over $\mathbb{Z}_{p}$.
    \end{itemize}
\end{corollary}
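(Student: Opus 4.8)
The plan is to obtain the corollary by chaining the two immediately preceding results: \Cref{sac_circuit_poly} converts $C$ into a low-degree polynomial with few monomials over $\mathbb{Z}$, and \Cref{representation_integer_boolean_function} converts any such polynomial into a clean register program. So the proof is essentially a substitution of parameters together with a check that the hypotheses line up.

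Concretely, I would first observe that every OR gate of $C$ has fan-in at most $s$, so \Cref{sac_circuit_poly} applied with $\ell \le s$ gives a polynomial $P \in \mathbb{Z}[x_1,\dots,x_n]$ representing $C$ of degree $k \le 2^d$ with $t \le s^{2^d}$ monomials; that is, $C$ is $(\mathbb{Z}, 2^d, s^{2^d})$-represented. Since $P$ is built up from literals using only sums (for OR) and products (for AND), it is a formal sum of $t$ monomials each of coefficient $\pm 1$, so $|P(x)| \le t$ at every Boolean input; hence as soon as $p > t$ the polynomial still represents $C$ after reduction modulo $p$ — exactly the situation \Cref{representation_integer_boolean_function} is set up to handle. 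Its hypothesis $p > \max\{k,t\}$ is therefore guaranteed by the assumed bound $p > s^{2^d}$ (note $s^{2^d} \ge 2^d \ge k$ whenever $s \ge 2$, and the $d = 0$ case is trivial). One can also phrase this by regarding $C$ as a monotone function of its $2n$ literals, which makes the representing polynomial have all coefficients $+1$.

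Applying \Cref{representation_integer_boolean_function} then yields a clean register program for $C$ with $64$ recursive calls to each input (an access to a negated literal $\neg x_j$ being an access to $x_j$ plus a basic instruction, which does not change the asymptotics), $O(t p^2\log p)$ basic instructions, and $O(tp)$ registers over $\mathbb{Z}_p$. Finally, substituting $t \le s^{2^d}$ and $p = O(s^{2^d})$ — the latter legitimate if one takes $p$ to be the least prime exceeding $s^{2^d}$ and invokes Bertrand's postulate — turns these counts into $s^{O(2^d)}$ basic instructions and $s^{O(2^d)}$ registers, matching the statement. I do not expect any genuine difficulty; the only things to watch are the notational clash ($d$ denotes the circuit depth in the corollary but the polynomial degree in \Cref{representation_integer_boolean_function}), the check that reducing $P$ modulo $p$ preserves the representation, and routine bookkeeping for negated literals and for the size of $p$.
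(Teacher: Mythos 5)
Your proof is correct and takes the same route as the paper, which dispatches this corollary in one line by combining \Cref{sac_circuit_poly} (with $\ell\le s$) and \Cref{representation_integer_boolean_function}. Your bookkeeping is in fact a little more careful than the paper's: you correctly note that $O(tp)$ registers with $t\le s^{2^d}$ and $p$ the least prime exceeding $s^{2^d}$ gives $s^{O(2^d)}$ registers rather than the literal $s^{2^d}$ stated, and you flag that accessing negated literals doubles the per-variable recursive-call constant from $64$ to at most $128$; both are harmless for the downstream use in \Cref{main_sac} since the constants are absorbed, but they are genuine imprecisions in the corollary as written that your proof rightly identifies and handles.
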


\noindent
Our main result follows for the right choice of $d$.

\begin{proof}[Proof of \Cref{main_sac}]
    Let $d\in \mathbb{N}$ be such that $d\leq \epsilon \log\log{n}$. \Cref{representation_circuit_depth_d} provides a register program cleanly computing any size $n^{O(1)}$ depth $d$ bounded fan-in OR fan-in $2$ AND circuit with
    \begin{itemize}
        \item 
        $64$ recursive calls to each $x_i$,
        \item 
        $2^{O(\log^{1+\epsilon}{n})}$ basic instructions, and
        \item 
        $2^{O(\log^{1+\epsilon}{n})}$ registers over $\mathbb{Z}_{p}$, where $p=2^{O(\log^{1+\epsilon}{n})}$.
    \end{itemize}
    Given an $\SACt$ circuit $C$ of size $n^{O(1)}$ and depth $O(\log^2{n})$, we will rewrite it as a circuit $C'$ of size at most $n^{O(1)}$ and depth $O(\frac{\log^2{n}}{d})$, where each gate is a size $n^{O(1)}$ depth $d$ circuit with unbounded fan-in OR and fan-in 2 AND gates. Hence using \Cref{depth_d_gates}, for $p=2^{O(\log^{1+\epsilon}{n})}$ we have a register program for $C$ with
     \begin{itemize}
        \item 
        $64^{O\left(\frac{\log^2{n}}{\epsilon\log\log{n}}\right)} = 2^{O\left(\frac{\log^2{n}}{\log\log{n}}\right)}$ recursive calls to each $x_i$,
        \item 
        $n^{O(1)} 2^{O(\log^{1+\epsilon} n)} \cdot 2^{O\left(\frac{\log^2{n}}{\log\log{n}}\right)} = 2^{O\left(\frac{\log^2{n}}{\log\log{n}}\right)}$ basic instructions, and
        \item 
        $n^{O(1)} \cdot 2^{O(\log^{1+\epsilon}{n})}=2^{O(\log^{1+\epsilon}{n})}$ registers over $\mathbb{Z}_{p}$.
    \end{itemize}
    At the end these recursive calls translate into basic instructions reading the input, giving a total time of $2^{O\left(\frac{\log^2{n}}{\log\log{n}}\right)}$. We can thus translate this register program to a catalytic machine using  \Cref{from_register_to_turing_machine}, and we deduce that:
    \begin{equation*}
    C \in \CSPACE\left(\frac{\log^2{n}}{\log\log{n}}, 2^{O(\log^{1+\epsilon}{n})}\right)
    \end{equation*}  
    as claimed.
\end{proof}
\subsection{Matrix Powering via Decomposition}
We now move to register programs for computing powers of a matrix $M\in M_n(\mathbb{Z}_p)$. A first attempt can be given by simply applying \Cref{homogeneous_degree_d_linear_form}, as computing $M^{d}$ is equivalent to computing $n^2$ degree $d$ polynomial in the $n^2$ coefficients; namely, if we denote by $m^{(d)}_{i,j}$ the coefficient of $M^d$, we have:
 \begin{equation*}
     m^{(d)}_{i,j}= \sum_{1\leq k_1,\dots, k_{d-1}\leq n} m_{i,k_1}\left(\prod_{i=1}^{d-2}m_{k_i,k_{i+1}}\right)m_{k_{d-1},j}.
 \end{equation*}

We can therefore use \Cref{homogeneous_degree_d_linear_form} to compute $M^d$ for $d<p$:

\begin{lemma}
\label{lemma_powering_small_d}
    Let $M\in M_n(\mathbb{F}_p)$. There is a register program that computes $M^d$ for $d<p$
    with:
    \begin{itemize}
        \item 
        $4$ recursive calls to $M$, 
        \item 
        $O(dn^2\binom{n+d}{d})$ basic instructions, and
        \item
        $O\left(dn^2\binom{n^2+d}{d}\right)$ registers over $\mathbb{F}_p$.
\end{itemize}
\end{lemma}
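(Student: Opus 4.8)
The plan is to apply \Cref{homogeneous_degree_d_linear_form} directly to each entry of $M^d$. The key observation is the one recorded just above the statement: if $m^{(d)}_{i,j}$ denotes the $(i,j)$ entry of $M^d$, then
\[
m^{(d)}_{i,j} = \sum_{1\le k_1,\dots,k_{d-1}\le n} m_{i,k_1}\Bigl(\prod_{t=1}^{d-2} m_{k_t,k_{t+1}}\Bigr) m_{k_{d-1},j},
\]
so each of the $n^2$ entries of $M^d$ is a \emph{homogeneous} polynomial of degree exactly $d$ in the $n^2$ ring elements $m_{k,\ell}$, which are exactly the inputs the program is permitted to access. Since $d<p$, \Cref{homogeneous_degree_d_linear_form} applies to each such polynomial on $n^2$ variables and yields, for every pair $(i,j)$, a clean register program computing $m^{(d)}_{i,j}$ with $4$ recursive calls to each $m_{k,\ell}$ and with basic-instruction and register counts equal to those of \Cref{homogeneous_degree_d_linear_form} after the substitution $n\mapsto n^2$.

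Next I would assemble these $n^2$ programs into a single program that outputs all of $M^d$ at once, in the same way that the $\ell$-fold version of \Cref{univariate_polynomial} is obtained in \Cref{univariate_polynomial_set} and the parallel combination is carried out inside the proof of \Cref{homogeneous_degree_d_linear_form}: run the $n^2$ programs in parallel, giving each entry its own output register but sharing all the common work. Concretely, for each linear form $\ell$ produced by the Waring-type decomposition of \Cref{representation_linear_form}, the powers $\ell,\ell^2,\dots,\ell^d$ are computed once into a block of $O(d)$ registers via \Cref{univariate_polynomial} and then reused by every entry of $M^d$ that references $\ell$ and by every term of that entry. This keeps the number of recursive calls to $M$ at $4$ rather than $4n^2$, while the basic-instruction count and the register count pick up a factor of $n^2$ for the output entries. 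Feeding these counts through \Cref{composition_lemma} together with the bound on the number of linear forms from \Cref{representation_linear_form} yields the claimed parameters; the hypothesis $d<p$ is used only where \Cref{representation_linear_form}/\Cref{homogeneous_degree_d_linear_form} need it.

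I expect the one delicate point to be this last piece of bookkeeping: verifying that the parallel combination genuinely keeps the recursive-call count at $4$, i.e.\ that the $d$th-power subcomputations of the linear forms are shared across all $n^2$ outputs and across the terms within each output (the same mechanism that makes \Cref{univariate_polynomial_set} cost only $4$ recursive calls), and that the registers can be laid out so the total matches the stated bound. Everything else is a mechanical instance of ``degree-$d$ polynomial in $n^2$ variables, computed by \Cref{homogeneous_degree_d_linear_form}, run $n^2$-fold in parallel.''
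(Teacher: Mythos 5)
Your proposal is correct and matches the paper's approach exactly: the paper's argument is precisely the paragraph preceding the lemma, which observes that each entry $m^{(d)}_{i,j}$ is a homogeneous degree-$d$ polynomial in the $n^2$ entries of $M$, invokes \Cref{homogeneous_degree_d_linear_form} with $n \mapsto n^2$, and runs the $n^2$ resulting programs in parallel so that the recursive calls to the entries of $M$ are shared (the same mechanism that \Cref{univariate_polynomial_set} and the proof of \Cref{homogeneous_degree_d_linear_form} itself rely on). You correctly identify the one delicate point, namely that the $4$ recursive calls per $x_i$ must be shared across the $n^2$ outputs rather than multiplied by $n^2$.

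One small remark on bookkeeping: your calculation yields $O\bigl(dn^2\binom{n^2+d}{d}\bigr)$ for \emph{both} the basic-instruction count and the register count, since both come from \Cref{homogeneous_degree_d_linear_form} with the substitution $n \mapsto n^2$ and an extra factor $n^2$ for the parallel outputs. The lemma as stated in the paper writes $O\bigl(dn^2\binom{n+d}{d}\bigr)$ for the basic instructions, which appears to be a typo: it disagrees with your (correct) derivation, and also with \Cref{lemma_powering_all_n} just below, where the paper itself states $O\bigl(dn^2\binom{n^2+d}{d}\bigr)$ basic instructions. So your counting is the internally consistent one.
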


The register program from \Cref{lemma_powering_small_d} is a first step towards a more generic program for matrix powering, but it has two major issues:
\begin{itemize}
    \item 
    it works only to compute powers up to $p-1$.
    \item 
    the number of registers grows exponentially with $d$.
\end{itemize}
We address these issues independently to get a register program which can handle all the cases.

\paragraph*{Computing any power $d$}

Let $M\in M_n(\mathbb{F}_p)$ and let $L\in M_n(\mathbb{Z})$ be the natural extension of $M$ to integers.
Observe that the coefficients $\ell^{(d)}_{i,j}$ of $L^d$ are evaluations of degree $d$ polynomials with $n^{d-1}$ terms, and hence $\ell^{(d)}_{i,j}\leq p^d n^{d-1}$.

Let us consider the first prime number $q$ 
greater than $p^d n^{d-1}\geq d$. In this case, we have $\left(\ell^{(d)}_{i,j} \bmod q \right)\bmod p= \ell^{(d)}_{i,j}\bmod p = m^{(d)}_{i,j}$. Hence we can use the register program of \Cref{lemma_powering_small_d} for $d<q$ and have an output register over $\mathbb{F}_p$ for each coefficient. This yields the following:

\begin{lemma}
\label{lemma_powering_all_n}
    Let $M\in M_n(\mathbb{F}_p)$. There is a register program that computes $M^d$ with:
    \begin{itemize}
        \item 
        $4$ recursive calls to $M$, 
        \item 
        $O\left(dn^2\binom{n^2+d}{d}\right)$ basic instructions,
        \item
        $O\left(dn^2\binom{n^2+d}{d}\right)$ registers over  $\mathbb{F}_{O(p^d n^{d-1})}$, and
        \item 
        $O(n^2)$ registers over $\mathbb{F}_p$.
    \end{itemize}
\end{lemma}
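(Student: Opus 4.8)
The statement is the base case \Cref{lemma_powering_small_d} pushed through two standard reductions — removing the hypothesis $d<p$ and converting the output back to $\mathbb F_p$ — so the plan is to carry those out.

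First I would lift $M$ to the matrix $L\in M_n(\mathbb Z)$ whose entries are the representatives $0,\dots,p-1$ of those of $M$, and bound the entries of $L^d$. Since $\ell^{(d)}_{i,j}=\sum_{1\le k_1,\dots,k_{d-1}\le n} m_{i,k_1}m_{k_1,k_2}\cdots m_{k_{d-1},j}$ is a sum of $n^{d-1}$ products of $d$ integers below $p$, we have $0\le\ell^{(d)}_{i,j}<p^d n^{d-1}=:B$. Let $q$ be the least prime exceeding $B$; by Bertrand's postulate $q<2B=O(p^d n^{d-1})$, and since $p\ge 2$ we get $B\ge 2^d>d$, hence $q>d$ as well. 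The role of this choice is that each $\ell^{(d)}_{i,j}$ is a non-negative integer strictly below $q$, so it equals its own residue mod $q$, whence $\bigl(\ell^{(d)}_{i,j}\bmod q\bigr)\bmod p=m^{(d)}_{i,j}$.

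Next, since $d<q$, I would apply \Cref{lemma_powering_small_d} over the prime field $\mathbb F_q$. This produces a clean register program with $4$ recursive calls to $M$, $O(dn^2\binom{n^2+d}{d})$ basic instructions, and $O(dn^2\binom{n^2+d}{d})$ registers over $\mathbb F_q$, whose $n^2$ designated output registers hold the entries of $L^d\bmod q$ — by the previous paragraph, the integers $\ell^{(d)}_{i,j}$ themselves. Here $|\mathbb F_q|=q=O(p^d n^{d-1})$, as in the conclusion. It then remains to turn each of these $\mathbb F_q$-valued output registers into an $\mathbb F_p$-valued register holding $m^{(d)}_{i,j}=\ell^{(d)}_{i,j}\bmod p$: exactly as in the concluding OR step of \Cref{representation_integer_boolean_function}, one views the register's value in binary with $O(\log q)=O(d\log(np))$ bits, runs a small sub-program computing the residue of that bit-string modulo $p$, adds the result cleanly into a fresh register over $\mathbb F_p$, and uncomputes the scratch, composing through \Cref{composition_lemma}. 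Doing this for all $n^2$ entries adds $O(n^2)$ registers over $\mathbb F_p$, contributes only $n^2\cdot\operatorname{poly}(d,\log(np))$ further basic instructions and registers (absorbed into the bounds above), and — with the usual care — leaves the number of recursive calls to $M$ at $4$.

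The one genuinely delicate point is this last reduction: implementing $v\mapsto v\bmod p$ for a value $v$ stored in an $\mathbb F_q$-register as a \emph{clean} register program, i.e.\ fixing a canonical representative for elements of $\mathbb F_q$, extracting the residue without a multiplicative blow-up in registers, and restoring all intermediate scratch so that only the newly added $\mathbb F_p$ register changes. The remaining ingredients — the entrywise bound on $L^d$, the existence of a suitable $q$ via Bertrand's postulate, and tracking the parameters through \Cref{composition_lemma} — are routine, and the precise constants follow by the same accounting as in the earlier lemmas.
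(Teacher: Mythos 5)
Your proposal follows exactly the same route as the paper: lift $M$ to the integer matrix $L$, bound $\ell^{(d)}_{i,j}$ by $p^d n^{d-1}$, pick a prime $q$ above this bound (so that the computation of \Cref{lemma_powering_small_d} over $\mathbb{F}_q$ produces the true integer entries of $L^d$), and then reduce each $\mathbb{F}_q$-valued output modulo $p$ into an $\mathbb{F}_p$ register. You supply a bit more detail than the paper — the Bertrand's-postulate justification for $q=O(p^d n^{d-1})$ and the observation $p^d n^{d-1}\ge 2^d>d$, which the paper asserts without comment — and you rightly flag the mod-$p$ conversion of the $\mathbb{F}_q$ outputs as the genuinely delicate step; the paper's own proof dispatches this in half a sentence (``have an output register over $\mathbb{F}_p$ for each coefficient'') without elaborating how the reduction is realized cleanly, so your caveat is, if anything, more honest than the source. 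The one place you might be sharper: the claim that the conversion ``leaves the number of recursive calls to $M$ at $4$'' is asserted in both your proposal and the paper, but composing a nonlinear $\mathbb{F}_q\to\mathbb{F}_p$ reduction onto the output via \Cref{composition_lemma} would ordinarily multiply the recursive-call count (as it does in \Cref{representation_integer_boolean_function}, where the final OR step bumps $16$ to $64$); neither you nor the paper explain why that blow-up is avoided here, so you are faithfully reproducing the paper's gap rather than introducing a new one.
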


\paragraph*{Reducing the Number of Registers}
The latter register program works for all $d$ and has a constant number of recursive calls. However, the number of registers is still exponential in $d$, and therefore it is not usable as is. To fix this issue, let us observe that, if we let $f_d$ be the powering function $f_d(M)=M^d$, we have $f_{d^k}=(f_d)^{\circ^k }$, where $\circ^k$ means composing $k$ times the same function. An iterated application of \Cref{composition_lemma} gives the following:




\begin{lemma}
\label{boosting_matrix_powering}
    Let $\mathbf{R}$ be a ring, $x\in \mathbf{R}$ and $\delta\in\mathbb{N}$. Suppose that for all $k\leq \delta$ there is a clean register program $P_k$ computing $x^k$ with at most $t$ recursive calls, $s$ basic instructions and $r$ registers. 
    Then, there exists a register program $P$ that computes $x^d$ with
    \begin{itemize}
        \item at most $(\lceil \log_\delta{d}\rceil+2) \frac{t}{t-1} t^{\lceil\log_\delta{d}\rceil}$ recursive calls,
        \item 
        $O(\log_\delta{d}) \frac{t+s(t+1)^{\lceil\log_\delta{d}\rceil}}{t} $ basic instructions, and
        \item $1+r(\lfloor \log_\delta{d}\rfloor+1)$ registers over $\mathbf{R}$.
    \end{itemize}
\end{lemma}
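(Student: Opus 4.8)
## Proof Plan for Lemma~\ref{boosting_matrix_powering}

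The plan is to express $d$ in base $\delta$ and compute $x^d$ by a sequence of nested applications of the powering-by-$\delta$ subroutine, collecting the residual digits along the way. Write $d$ in base $\delta$ as $d = \sum_{i=0}^{m} c_i \delta^i$ with $m = \lfloor \log_\delta d \rfloor$ and each digit $c_i \in \{0, \dots, \delta-1\}$. Using Horner's scheme, $x^d = \big(\cdots\big((x^{c_m})^{\delta} \cdot x^{c_{m-1}}\big)^{\delta} \cdots\big)^{\delta} \cdot x^{c_0}$. Each step of this scheme does two things to the current accumulated value $y$: raise it to the $\delta$-th power (one application of $P_\delta$, which has $t$ recursive calls, $s$ basic instructions, $r$ registers), and multiply by $x^{c_i}$. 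The second operation is where we fold in a fresh access to $x$: multiplying $y$ by $x^{c_i}$ can be realized as applying the program $P_{c_i+1}$ to the product $yx$, or more simply by observing that $y \cdot x^{c_i}$ is itself a call to a clean register program on $x$ (given $y$ as a constant-like register); in either case this costs at most $t$ recursive calls, $s$ basic instructions, and $r$ registers, and we can reuse the same $r$ register slots at every level since the partial result lives in a single output register that we carry forward.

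The key steps, in order: (1) set $k = \lceil \log_\delta d \rceil$ and unfold Horner's scheme into $k$ "rounds", each round being the composition of an $x^\delta$ step and an $x^{c_i}$-multiplication step; (2) apply the Composition Lemma (\Cref{composition_lemma}) iteratively, top-down, tracking how recursive calls multiply and basic instructions accumulate; (3) for the recursive-call count, note that composing $k$ programs each with $\le t$ recursive calls multiplicatively gives $t^k$ for the innermost input, but since at each of the $k+O(1)$ levels we inject $O(1)$ extra sub-calls (the digit multiplications), the total is bounded by summing a geometric-type series, yielding $(k+2)\tfrac{t}{t-1}t^{k}$ as stated; (4) for basic instructions, each level contributes $s$ instructions multiplied by the recursive-call multiplier of the levels above it, so the total is a geometric sum in $(t+1)$, giving $O(\log_\delta d) \cdot \tfrac{t + s(t+1)^{k}}{t}$; (5) for registers, observe that the $k+1$ "live" accumulator slots down the Horner chain must coexist (since computing a deeper power requires holding the shallower partial result), but all the scratch registers of the individual $P_j$ invocations can be shared, so the count is $1 + r(\lfloor \log_\delta d \rfloor + 1)$, where the leading $1$ is the single register holding $x$ itself.

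The main obstacle, and the place I'd be most careful, is the register-count bookkeeping: naively applying \Cref{composition_lemma} $k$ times would only give $\max$ of the register counts, i.e.\ $O(r)$, but the lemma's composition reuses the \emph{input register} of the outer program as the \emph{output register} of the inner program — and here the "inner" computation is not clean-and-forgettable because Horner's scheme genuinely needs the chain of partial powers $x^{c_m}, (x^{c_m})^\delta x^{c_{m-1}}, \dots$ to be available simultaneously as we descend. So the correct accounting is that we cannot collapse these $k+1$ accumulators, hence the additive $r(\lfloor\log_\delta d\rfloor+1)$ rather than a bare $r$. I would make this precise by induction on $k$: the program for $x^{\delta^{k}}$-style expressions uses one register for $x$ plus $r$ fresh scratch registers for the current $P_\delta$ call plus the registers of the depth-$(k-1)$ subprogram, the latter being recursively $1 + rk$, and since the $x$-register is shared the total is $1 + r(k+1)$. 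The recursive-call and instruction bounds are then routine geometric-series estimates once the unfolding structure is fixed; the only subtlety there is handling the $t=1$ edge case in the $\tfrac{t}{t-1}$ factor, which should be read as the limiting bound $k+2$ on the number of rounds when $t = 1$.
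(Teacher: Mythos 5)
Your proposal uses a genuinely different decomposition from the paper's. The paper writes
\begin{equation*}
x^d \;=\; \prod_{i=0}^{\lfloor\log_\delta d\rfloor} \bigl(x^{\delta^i}\bigr)^{\alpha_i},
\end{equation*}
computes each factor $(x^{\delta^i})^{\alpha_i} = f_{\alpha_i}\circ(f_\delta)^{\circ i}(x)$ as an independent clean chain via \Cref{composition_lemma} (this is where the $t^{i+1}$ and $(t+1)^i s$ geometric terms come from, and why each chain only needs $r$ registers), holds all $\lfloor\log_\delta d\rfloor+1$ factors in parallel (hence the additive register count $r(\lfloor\log_\delta d\rfloor+1)$), and then invokes the Cook--Mertz interpolation register program (\Cref{interpolation_register_program}) exactly once at the top, treating the product as a degree-$(\lfloor\log_\delta d\rfloor+1)$ polynomial in the factors; that single interpolation is what contributes the outer $(\lceil\log_\delta d\rceil+2)$ multiplicative factor and the extra $+1$ register. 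You instead use Horner's scheme and claim you can fold the digit multiplications in for free.

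The gap in the Horner route is the multiplication step $y\mapsto y\cdot x^{c_i}$. This is not a univariate composition, so it is not covered by \Cref{composition_lemma}, nor by the hypothesis, which only provides clean programs for $z\mapsto z^k$. Neither of your two suggested fixes works: $(yx)^{c_i+1}=y^{c_i+1}x^{c_i+1}\neq y\cdot x^{c_i}$; and treating $y$ as a ``constant-like register'' is not legal, since $y$ is itself a function of the input and lives in a register whose contents a clean program must restore --- cleanly multiplying two register-held quantities in this model requires a dedicated primitive. That primitive is exactly the Cook--Mertz interpolation lemma, which your argument never invokes. Moreover, if one repairs Horner by interpolating a degree-$2$ product at each of the $\Theta(\log_\delta d)$ levels, the interpolation's extra recursive calls compound \emph{multiplicatively} per level, so the total call count would not match the stated $(\lceil\log_\delta d\rceil+2)\tfrac{t}{t-1}t^{\lceil\log_\delta d\rceil}$; the paper avoids this precisely by interpolating only once, at the outermost layer. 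Your register-count story (the $k+1$ accumulators ``must coexist'') is a heuristic rationalization rather than the actual reason --- in the paper the $r(\lfloor\log_\delta d\rfloor+1)$ arises because the interpolation lemma needs all $\lfloor\log_\delta d\rfloor+1$ input registers simultaneously, one per factor --- so it does not save the argument.
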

\begin{proof}
    Let $f_k\colon \mathbf{R} \rightarrow \mathbf{R}$ be the functions that computes $x^k$ for all $k\leq d$. 
    Observe that 
    $$d=\sum_{i=0}^{\lfloor \log_\delta{d}\rfloor} \alpha_i \delta^i$$
    where $\alpha_i$ are non-negative integers smaller than $\delta$. Therefore
    \begin{equation*}
        x^d= \prod_{i=0}^{\lfloor \log_\delta{d}\rfloor} (x^{\delta^i})^{\alpha_i}
    \end{equation*}
    Note that $ (x^{\delta^i})^{\alpha_i}=f_{\alpha_i}\circ {(f_\delta)}^{i}(x)$. We can apply \Cref{composition_lemma} to obtain a register program $R_i$ that cleanly computes $(x^{\delta^i})^{\alpha_i}$ with $t^{i+1}$ recursive calls, $(t+1)^i s$ basic instructions, and $r$ registers.
    Then, using \Cref{interpolation_register_program} for $P = \prod_i (x^{\delta^i})^{\alpha_i}$,
%
    %
    we can compute $x^d$ by calling $\lfloor \log_\delta{d}\rfloor+2\leq \lceil \log_\delta{d}\rceil+2$ times each program $R_i$,  and using a single additional register. In total, we require
    \begin{itemize}
        \item 
        at most $(\lceil \log_\delta{d}\rceil+2)\sum_{i=0}^{\lfloor \log_\delta{d}\rfloor} t^{i+1} \leq (\lceil \log_\delta{d}\rceil+2) \frac{t}{t-1} t^{\lceil\log_\delta{d}\rceil}  $ recursive calls to $x$,
        \item 
        $O(\log_\delta{d})\left(1+s\sum_{i=0}^{\lfloor \log_\delta{d}\rfloor} (t+1)^{i}\right) \leq O(\log_\delta{d}) \frac{t+s(t+1)^{\lceil\log_\delta{d}\rceil}}{t} $ basic instructions, and
        \item 
        $1+\sum_{i=0}^{\lfloor \log_\delta{d}\rfloor} r =1+r(\lfloor \log_\delta{d}\rfloor+1)$ registers. \qedhere
    \end{itemize} 
\end{proof}

\noindent
Hence, we fix some $\delta \in \mathbb{N}$, and we can compute $M^d$ for all $d$ based on the register program for $\delta$. This yields our register program for \Cref{Matrix_powering_intro}.

\begin{proof}[Proof of \Cref{Matrix_powering_intro}]
    Let $\delta\in \mathbb{N}$. Combining \Cref{lemma_powering_all_n} and \Cref{boosting_matrix_powering}, we get that for all $d\in\mathbb{N}$, there exists a register program that computes $M^d$ with: 
    \begin{itemize}
        \item 
        $O(\frac{\log{d}}{\log{\delta}}d^{\frac{3}{\log{\delta}}})$ recursive calls to $M$,
        \item 
        $O\left(\frac{\delta n^2d^{\frac{3}{\log{\delta}}}\log{d}}{\log{\delta}}\binom{n^2+\delta}{\delta}\right)$ basic instructions,
        \item 
        $O(\frac{n^2 \log{d}}{\log{\delta}})$ registers over $\mathbb{F}_p$.
    \end{itemize}
    Replacing by $\epsilon=\frac{3}{\log{\delta}} $, yields a program with
    \begin{itemize}
        \item 
         $O(\epsilon d^{\epsilon}\log{d})$ recursive calls to $M$,
         \item 
        $O\left(\epsilon \frac{n^{2^{\frac{3}{\epsilon}+1}}}{2^{\frac{3}{\epsilon}-1}}d^{\epsilon}\log{d}\right)$ basic instructions,
        \item
          $O\left(\epsilon \frac{n^{2^{\frac{3}{\epsilon}+1}}}{2^{\frac{3}{\epsilon}-1}}\log{d}\right)$ registers over  $\mathbb{F}_q$ for $q = O\left(\left(np\right)^{2^{\frac{3}{\epsilon}}} \right)$, and
        \item 
        $O(\epsilon n^2\log{d})$ registers over $\mathbb{F}_p$.
    \end{itemize}
    which completes the proof.
\end{proof}

\bibliographystyle{alpha}
\bibliography{bibliography}

\end{document}